\documentclass[11pt]{article}

\title{An algorithm for calculating the set of superhedging portfolios in markets with transaction costs}

\author{Andreas L{\"o}hne \thanks{Martin-Luther-Universit{\"a}t Halle-Wittenberg, NWF II, Departement of Mathematics, 06099 Halle (Saale), andreas.loehne@mathematik.uni-halle.de}, Birgit Rudloff \thanks{Princeton University, Department of Operations Research
and Financial Engineering, Princeton, NJ 08544, USA,
brudloff@princeton.edu, research supported by NSF award DMS-1007938.}}

\usepackage{amsmath, amssymb, color}
\usepackage{amsthm}
\usepackage{verbatim}
\usepackage{pst-node}           % PSTricks package for nodes
\usepackage{pst-tree}           % PSTricks package for nodes
\usepackage{epsfig}

\newtheorem{theorem}{Theorem}
\newtheorem{corollary}[theorem]{Corollary}
\newtheorem{remark1}[theorem]{Remark}
\newtheorem{lemma}[theorem]{Lemma}

\newtheorem{example1}[theorem]{Example}
\newenvironment{remark}{\begin{remark1}\rm}{\end{remark1}}
\newenvironment{example}{\begin{example1}\rm}{\end{example1}}

\numberwithin{equation}{section} \numberwithin{theorem}{section}

\newcommand{\capf}{{\rm cap \,}}
\newcommand{\clco}{{\overline{\rm co} \,}}

\newcommand{\of}[1]{\ensuremath{\left( #1 \right)}}

\newcommand{\cb}[1]{\ensuremath{ \left\{ #1 \right\} }}
\newcommand{\sqb}[1]{\ensuremath{ \left[ #1 \right] }}

\newcommand{\bs}{\backslash}

\newcommand{\vp}{\ensuremath{\varphi}}

\newcommand{\R}{\mathrm{I\negthinspace R}}
\newcommand{\N}{\mathrm{I\negthinspace N}}

\newcommand{\suc}{{\rm succ \,}}

\newcommand{\hypo}{{\rm hypo \,}}

\newcommand{\diag}{{\rm diag}}

\newcommand{\Int}{{\rm int\,}}

\setlength{\textwidth}{16cm} \setlength{\textheight}{22cm} \setlength{\topmargin}{0cm}
\setlength{\oddsidemargin}{0cm}

%\renewcommand{\baselinestretch}{1.2}

%%%

\renewcommand{\P}{\mathcal{P}}
\newcommand{\D}{\mathcal{D}}

%%%

%\baselineskip13pt

\begin{document}

\maketitle

\begin{abstract}
We study the explicit calculation of the set of superhedging portfolios of contingent claims in a discrete-time market model for $d$ assets with proportional transaction costs. 
The set of superhedging portfolios can be obtained by a recursive construction involving set operations, going backward in the event tree. We reformulate the problem as a sequence of linear vector optimization problems and solve it by adapting known algorithms. The corresponding superhedging strategy can be obtained going forward in the tree.
Examples are given involving multiple correlated assets and basket options.
Furthermore, we relate existing algorithms for the calculation of the scalar superhedging price to the set-valued algorithm by a recent duality theory for vector optimization problems. The main contribution of the paper is to establish the connection to linear vector optimization, which allows to solve numerically multi-asset superhedging problems under transaction costs.
 \\[.2cm]
{\bf Keywords and phrases.} transaction costs, superhedging, set-valued risk measures, coherent
risk measures, algorithms, conical market model, vector optimization, geometric duality
\\[.2cm]
{\bf Mathematical Subject Classification (2000).} 91B30, 46A20, 46N10, 26E25, 90C29
46N10, 26E25
\\[.2cm]
{\bf JEL Classification.} C65, D81
\end{abstract}

\section{Introduction}
\label{sec_Intro}
In this paper, a method is provided for an explicit calculation of the set of superhedging portfolios of contingent claims in a discrete-time market model for $d$ assets with proportional transaction costs when the underlying probability space is finite.
The set of superhedging portfolios in markets with transaction costs has been characterized under appropriate no arbitrage conditions using consistent price systems, the pendant to the density process of equivalent martingale measures in markets with transaction costs (\cite{Kabanov99,KabanovSafarian09,Schachermayer04}).
 Most algorithms so far concerned the calculation of the scalar superhedging price in markets with two assets. The scalar superhedging price is the smallest superhedging price if a specific asset is chosen as the num\'{e}raire. Often, strong assumptions on the size of the transaction costs or the type of contingent claims had to be imposed to provide an algorithm to calculate the scalar replication price and to ensure that this price coincides with the smallest superhedging price (see e.g. \cite{BensaidLesnePagesScheinkman92,BoyleVorst92,PerrakisLefoll97,Palmer01}). Roux, Tokarz, Zastawniak \cite{RouxZastawniak08}, Roux \cite{Roux08} were able to drop those assumptions and develop an algorithm to calculate the scalar superhedging price in the two asset case that is based on the dual representation of the scalar superhedging price deduced in Jouini, Kallal \cite{JouiniKallal95}. The scalar superhedging price is important to deduce price bounds. But only the set of superhedging portfolios provides full information if one wants to carry out a superhedging strategy starting from an initial portfolio that might contain other assets besides the num\'{e}raire asset.  Thus, in this paper we follow the num\'{e}raire-free approach of Kabanov \cite{Kabanov99}, Schachermayer \cite{Schachermayer04} to develop an algorithm to calculate the set of all superhedging portfolios.
An algorithmic approach to calculate the set of superhedging portfolios was provided for $d=2$ assets in Roux, Zastawniak \cite{RouxZastawniak09}.
In the working paper \cite{RouxZastawniak11}, which was developed independently of our work and uses a different approach, Roux, Zastawniak  present an extension of their recursive representation \cite{RouxZastawniak09} of the set of superhedging portfolios to $d$ assets. 
Thus, the more theoretical results like theorem~\ref{theorem recursive SHP}, theorem~\ref{theorem JK} and corollary~\ref{lemma alg scalar shp} of the present paper can also be found in \cite{RouxZastawniak11} ((3.2), (3.3),  (6.5) and lemma 5.5), even for American and Bermudan options. The connection to linear vector optimization (section~\ref{sec_reformulationLVOP} and \ref{subsec_scalar problem as LVOP}) and the usage of Benson's algorithm \cite{HamLoeRud13} presented here 
add a novel and practical side to the topic as they allow to control the error caused by numerical inaccuracy. With Benson's algorithm one can calculate approximations of the set of superhedging portfolios for a pre-specified error level which can significantly reduce computational costs and becomes important for larger problems and sets with many vertices.

This paper concerns three goals. First, an algorithm is presented to calculate the \textit{set} of all superhedging portfolios in the $d$ asset case, as well as an algorithm to calculate the superhedging strategy when starting from an initial portfolio vector in the set of superhedging portfolios. The $d$ asset case also allows to consider basket options. Secondly, we will show that the superhedging problem in markets with transaction costs leads to a sequence of linear vector optimization problems. This generalizes the well known fact that in frictionless markets, superhedging leads to a sequence of linear (scalar) optimization problems (see e.g. chapter~7.1 in \cite{FoellmerSchied04}). Thirdly, we will show how the above mentioned scalar algorithm of \cite{RouxZastawniak08,Roux08} can be related to the algorithm presented here by a recent duality theory for vector optimization problems called  {\em geometric duality} \cite{HeydeLoehne08}. It is notable, that even for the determination of the scalar superhedging price the calculation of the set of superhedging portfolios at intermediate nodes is necessary. Thus, the calculation of the set of superhedging portfolios is not more involved than the calculation of the scalar superhedging price. 

The first one to connect coherent set-valued risk measures with linear vector optimization problems was Hamel \cite{Hamel11}, where in section~8 the set-valued average value at risk was formulated as a linear vector optimization problem. We pick up on that idea, as the set of superhedging portfolios can be seen as a set-valued coherent risk measure (see \cite{HamelHeydeRudloff10}), but adopt to the fact that in our dynamic setting and since superhedging strategies are path dependent, one rather formulates the problem as a sequence of linear vector optimization problems in the spirit of dynamic programming (see also section~\ref{sec_Scalarization}). This is related to time consistency properties of dynamic set-valued risk measures \cite{FeiRud13}.

The paper is structured as follows. Section~\ref{sec_Preliminaries} reviews basic definitions and results about market models with proportional transaction costs and superhedging in those markets. In section~\ref{sec_Recursive representation} a recursive construction of the set of superhedging portfolios is deduced.
An algorithm to calculate a superhedging strategy is presented in section~\ref{subsec_strategy}. 
Section~\ref{sec_reformulationLVOP} reformulates the superhedging problem in terms of linear vector optimization.
Section~\ref{sec_Examples} presents examples ranging from multi-period binomial models to correlated trees for multiple assets and includes European call options as well as basket options. Section~\ref{sec_Scalarization} presents the dual representation of the scalar superhedging price of Jouini, Kallal \cite{JouiniKallal95} generalized to the $d$ asset case, and provides an algorithm to calculate the scalar superhedging price. 
The relationship between the scalar and the set-valued algorithm is deduced in section~\ref{subsec_scalar problem as LVOP} via duality for the corresponding linear vector optimization problems.

%%%%%%%%%%%%%%%%%%%
%%%%%%%%%%%%%%%%%%%

\section{Preliminaries}
\label{sec_Preliminaries}
Consider a financial market where $d$ assets can be traded over finite discrete time $t=0,1,\dots,T$. As stochastic base we fix a filtered finite probability space $(\Omega,\mathcal F, (\mathcal F_t)_{t=0}^T,P)$ with $|\Omega|=N\in\N$, $P(\omega)>0$ for all $\omega\in\Omega$ and the usual assumptions regarding the filtration, in particular, $\{\emptyset,\Omega\}=\mathcal F_0\subseteq\mathcal F_1\subseteq...\subseteq\mathcal F_T=\mathcal F$. 

A portfolio vector at time $t$ is an $\mathcal F_t$-measurable random vector $V_t \colon \Omega \to \R^d$. The values
$V_t\of{\omega}$ of portfolio vectors are given in physical units, i.e., the $i$-th component of the vector
$V_{t}\of{\omega}$ is the number of units of the $i$-th asset in
the portfolio at time $t$ for $i = 1, \ldots, d$. Thus, we do not fix a reference asset, like a currency or some other num\'{e}raire, and treat all assets equally as it was initiated by Kabanov \cite{Kabanov99}.

A market model is an $(\mathcal F_t)_{t=0}^T$ adapted process $(K_t)_{t=0}^T$ of solvency cones. A solvency cone is a polyhedral convex cone $K_t\of{\omega}$ with $\R^d_+ \subseteq K_t\of{\omega}\neq \R^d$ for all $\omega \in \Omega$ and all $t\in\{0,1,\dots,T\}$ and models the proportional frictions between the assets according
to the geometric model introduced in Kabanov \cite{Kabanov99}, see also \cite{Schachermayer04,KabanovSafarian09}.
To be more precise, let the terms of trade at time $t$ be modeled via an $\mathcal F_t$-measurable $d\times d$ bid-ask-matrix $\Pi_t$, as in definition~1.1 in \cite{Schachermayer04}. That is, $\Pi_t$ is a matrix-valued map $\omega\mapsto \Pi_t(\omega)$, denoting the bid and ask prices for the exchange between the $d$ assets. The entry $\pi^{ij}$ of $\Pi_t$ denotes the number of units of asset $i$ for which an agent can buy one unit of asset $j$ at time $t$, i.e., the pair $(\frac{1}{\pi^{ji}},\pi^{ij})$ denotes the bid- and ask-prices of the asset $j$ in terms of the asset $i$. Furthermore, $\Pi_t$ satisfies
\begin{align*}
 \pi^{ij}>0,\quad  1\leq i,j\leq d,\\
 \pi^{ii}=1,\quad  1\leq i\leq d,\\
 \pi^{ij}\leq \pi^{ik}\pi^{kj},\quad  1\leq i,j,k\leq d.
\end{align*}
The solvency cone $K_t(\omega)$ is spanned by the vectors $\pi^{ij}e^i-e^j$, $1\leq i,j\leq d$, and the unit vectors
$e^i$, $1\leq i\leq d$. 
Some of those vectors might be redundant.
Throughout the paper we assume
\begin{equation}
\label{assumption K_t}
  \R^d_+\setminus\cb{0} \subseteq \Int K_t\of{\omega},
\end{equation}
which is satisfied if the $d$ assets are liquid in the sense that the exchange rates between any two assets are positive and finite. This assumption ensures that any asset $i\in\{1,...,d\}$  can be used as a num\'{e}raire asset, and thus is useful whenever we compare our method to scalar procedures (as in section~\ref{sec_Scalarization}). However, it is not needed for theorems~\ref{theorem recursive SHP}, \ref{algo} and corollary~\ref{theorem SHstrategy}, but is used in theorem~\ref{th_strat1}.

The cone $K_t$ is called the solvency cone since it includes precisely those portfolios at time $t$ which can be exchanged into portfolios with only non-negative components (by trading at the prevailing bid-ask prices at time $t$). Thus, $K_t$ contains both the information about exchange rates and proportional transaction costs at time $t\in\{0,1,\dots,T\}$.

By $L^0_d(\mathcal{F}_t,\R^d) =L^0_d(\Omega, \mathcal{F}_t, P;\R^d)$ we denote the linear space of all $\mathcal{F}_t$-measurable
random vectors $X \colon \Omega \to \R^d$ for  $t=0,1,\dots,T$. We denote by $L^0_d(\mathcal{F}_t,D_t) =L^0_d\of{\Omega, \mathcal{F}_t, P;D_t}$ those $\mathcal{F}_t$-measurable random vectors that take $P$-a.s. values in $D_t$. Since we work with a finite probability space, we can identify $L^0_d(\mathcal{F}_T,\R^d)$ with $\R^{dN}$.
An $\R^d$-valued adapted process $(V_t)_{t=0}^T$ is called a self-financing portfolio
process for the market given by $(K_t)_{t=0}^T$ if for all $t\in\{0,\dots,T\}$ it holds $V_t - V_{t-1} \in -K_{t}$ $P$-a.s. with the convention $V_{-1}=0$.
We denote by $A_T \subseteq L^0_d(\mathcal{F}_T,\R^d)$ the set of random vectors $V_T \colon \Omega \to \R^d$, each being the value of a self-financing portfolio process at time $T$, i.e. $A_T$ is the set of superhedgeable claims
starting from initial endowment $0\in\R^d$ at time zero. As it easily follows from the definition of
self-financing portfolio processes, $A_T$ is a convex cone and
\[
    A_T=-L^0_d(\mathcal{F}_0,K_0)-L^0_d(\mathcal{F}_1,K_1)-...-L^0_d(\mathcal{F}_T,K_T).
\]
Note that $x_0+A_T$ is the set of terminal values of self-financing portfolio processes starting from the initial portfolio vector $V_{-1}=x_0\in\R^d$ at time $t=0$.

The fundamental theorem of asset pricing states that on a finite probability space a market given by $\of{K_t}_{t=0}^T$ satisfies the no arbitrage condition if and only if there exists a consistent
price system $\of{Z_t}_{t = 0}^T$  (see \cite{KabanovStricker01,KabanovRasonyiStricker02,Schachermayer04}).
The definitions are as follows.
The market given by $\of{K_t}_{t=0}^T$ is said to satisfy the no arbitrage
property (NA) if $A_T \cap L^0_d(\mathcal{F}_T,\R^d_+) = \cb{0}$,
see \cite{Schachermayer04}. This property is also known under the name of weak no arbitrage condition \cite{KabanovStricker01,KabanovRasonyiStricker02}, here we follow the notion of Schachermayer \cite{Schachermayer04}.
By $K^+_t$ we denote the set-valued mapping $\omega \mapsto K_t^+\of{\omega}$,
where $K_t^+\of{\omega}$ denotes the positive dual cone of the cone $K_t\of{\omega}$ for $\omega \in \Omega$ and $t\in\{0,1,\dots,T\}$. Thus,
\[
K_t^+\of{\omega} = \cb{v \in \R^d \colon \forall u \in K_t\of{\omega} \colon v^Tu \geq 0}.
\]
An $\R^d_+$-valued adapted process $Z = \of{Z_t}_{t = 0}^T$ is called a consistent price system for the market model $\of{K_t}_{t=0}^T$ if $Z$ is a martingale
under $P$ and for all $t\in\{0,1,\dots,T\}$ it holds $Z_t \in K^+_t\bs\cb{0}$
$P$-a.s.
Let us denote the set of all consistent price systems by $\mathbb{Z}$.
An initial portfolio vector $x_0\in\R^d$ allows to superhedge a claim $X\in L^0_d(\mathcal{F}_T,\R^d)$ if there exists a self-financing portfolio process $V_T\in A_T$ such that $x_0+V_T=X$ $P$-a.s., i.e., if $X\in x_0+A_T$. Note that the term 'self-financing' also allows the agents to 'throw away' non-negative quantities of the assets, thus the above condition $x_0+V_T=X$ $P$-a.s. really means superhedging and not necessarily perfect replication.
Let us denote by $SHP_0(X)$ the set of all those portfolio vectors $x_0\in\R^d$ at time $t=0$ that allow to superhedge the claim $X\in L^0_d(\mathcal{F}_T,\R^d)$, i.e.,
\begin{align}
    \label{primalRM}
    SHP_0(X)= \cb{x_0\in \R^d \colon X \in x_0 +A_T}.
\end{align}
Let us denote the halfspace with normal vector $w\in\R^d$ by $G(w)=\{x\in\R^d\colon 0\leq w^Tx\}$. Let us define the dual elements $(Q,w)$ by
\begin{align*}
\mathcal{W}^1=  \bigg\{\of{Q,w} \in
\mathcal{M}^P_{1,d}\times \R^d\bs\cb{0} \colon
 \forall t \in \{0,1,\dots,T\} \colon E\sqb{\diag\of{w}\frac{dQ}{dP} \Big| \mathcal{F}_t} \in
L^1_d(\mathcal{F}_t,K_t^+)\bigg\}.
\end{align*}
$\mathcal{M}^P_{1,d} = \mathcal{M}^P_{1,d}\of{\Omega, \mathcal{F}_T}$ denotes the set of
all vector probability measures with components being absolutely continuous with respect
to $P$, i.e. $Q_i \colon \mathcal{F}_T \to \sqb{0,1}$ is a probability measure on
$\of{\Omega,\mathcal{F}_T}$ such that $\frac{dQ_i}{dP} \in L^1(\mathcal{F}_T,\R)$ for $i = 1, \ldots, d$.
The set $\mathcal{W}^1$ is one-to-one with the set of consistent price systems $\mathbb{Z}$ as it was shown in \cite{HamelHeydeRudloff10}.
The set of superhedging portfolios of a claim $X \in L^0_d(\mathcal{F}_T,\R^d)$ has the following dual representation.

\begin{theorem}[\cite{KabanovStricker01}, \cite{HamelHeydeRudloff10} corollary~5.4]
\label{ThmSH}
Under the no arbitrage condition (NA), for a claim $X\in L^0_d(\mathcal{F}_T,\R^d)$ one has
\begin{align}
    \label{dualRMSchachi}
    SHP_0(X)&= \cb{ x_0\in \R^d \colon \forall Z\in\mathbb{Z}:\;E[X^TZ_T]\leq x_0^TZ_0}
    \\
    \label{dualRM}
    &= \bigcap_{\of{Q, w} \in \mathcal{W}^1 } \of{E^Q\sqb{X} +G\of{w}}.
\end{align}
\end{theorem}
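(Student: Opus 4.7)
The strategy is to derive the representation (\ref{dualRMSchachi}) from the primal definition (\ref{primalRM}) by Hahn--Banach separation applied to the convex cone $A_T$, and then to translate (\ref{dualRMSchachi}) into the halfspace intersection (\ref{dualRM}) by a componentwise change of measure. Since the statement is quoted from \cite{Schachermayer04} and \cite{HamelHeydeRudloff10}, the role of the plan is to indicate the structure of the argument rather than to redo the full technical work.

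The easy inclusion $SHP_0(X)\subseteq$ RHS of (\ref{dualRMSchachi}) proceeds directly from the definitions. Take $x_0\in SHP_0(X)$; then $x_0-X\in -A_T$, so there exist $\xi_t\in L^0_d(\mathcal F_t,K_t)$, $t=0,\dots,T$, with $x_0-X=\sum_{t=0}^T \xi_t$ $P$-a.s. For any $Z\in\mathbb Z$ satisfying $E[(X^TZ_T)^-]<\infty$, the martingale property $E[Z_T\mid\mathcal F_t]=Z_t$ together with the $\mathcal F_t$-measurability of $\xi_t$ and the inequality $\xi_t^T Z_t\geq 0$ (valid since $\xi_t\in K_t$ and $Z_t\in K_t^+$) give $E[\xi_t^T Z_T]=E[\xi_t^T Z_t]\geq 0$. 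Summing over $t$ yields $x_0^T Z_0-E[X^T Z_T]\geq 0$, which is the required inequality.

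For the converse, suppose $x_0\notin SHP_0(X)$, equivalently $X-x_0\notin A_T$. The technical heart of the argument, and the main obstacle, is that under (NA$^\text{r}$) the cone $A_T$ is closed in $L^0_d(\mathcal F_T,\R^d)$; this is the key contribution of \cite{Schachermayer04} and rests on the leeway built into the enlarged cones $\widetilde K_t$ in (\ref{def K tilde}). Granting closedness, a separation argument in the dual pairing between $L^0_d$ and bounded random vectors produces an $\R^d_+$-valued random vector $\eta$ with $E[\eta^T Y]\leq 0$ for all $Y\in A_T$ and $E[\eta^T(X-x_0)]>0$. Because $A_T$ has the sum structure $-\sum_t L^0_d(\mathcal F_t,K_t)$, conditioning on $\mathcal F_t$ forces $E[\eta\mid\mathcal F_t]\in K_t^+$ $P$-a.s. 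Setting $Z_t:=E[\eta\mid\mathcal F_t]$ defines a martingale which, after a perturbation exploiting $K_t\subseteq\widetilde K_t$, can be adjusted to lie in $\ri K_t^+$, yielding a strictly consistent price system $Z\in\mathbb Z$ that violates the inequality in (\ref{dualRMSchachi}) for the given $x_0$.

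The equivalence of (\ref{dualRMSchachi}) and (\ref{dualRM}) is a change of variables, as used in \cite{HamelHeydeRudloff10}. Given $Z\in\mathbb Z$, set $w:=Z_0\in K_0^+$ and, for each coordinate $i$ with $w^i>0$, define a probability measure $Q^i$ by $dQ^i/dP=Z_T^i/w^i$; then $E[X^T Z_T]=\sum_i w^i E^{Q^i}[X^i]=w^T E^Q[X]$ in the componentwise notation of the paper, so the inequality $x_0^T Z_0\geq E[X^T Z_T]$ is exactly $w^T(x_0-E^Q[X])\geq 0$, i.e.\ $x_0\in E^Q[X]+G(w)$ with $G(w)=\{z\in\R^d\colon w^T z\geq 0\}$. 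Conversely, any admissible $(Q,w)\in\mathcal W^1$ reconstructs a consistent price system by reversing this identification. Intersecting over all admissible pairs yields (\ref{dualRM}); the only routine loose end is the treatment of coordinates with $w^i=0$, handled by a limiting argument.
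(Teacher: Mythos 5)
The paper does not prove this theorem: it is quoted verbatim from Schachermayer (2004), theorem~4.1, and Hamel--Heyde--Rudloff, corollary~5.4, so there is no in-paper argument to compare yours against. Your outline is a faithful reconstruction of the standard proof in those references: the easy inclusion via the decomposition $x_0-X=\sum_t\xi_t$ with $\xi_t\in L^0_d(\mathcal F_t,K_t)$ and the inequality $E[\xi_t^TZ_T]=E[\xi_t^TZ_t]\geq 0$; the converse via closedness of $A_T$ under (NA$^\text{r}$) plus separation; and the passage from \eqref{dualRMSchachi} to \eqref{dualRM} by the componentwise change of measure $Z_T^i=w^iدQ^i/dP$ with $w=Z_0$, which is exactly the identification underlying $\mathcal W^1$ in \eqref{dual elements SHP}. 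Your closing computation $E[X^TZ_T]=w^TE^Q[X]$ and the observation that the halfspace $E^Q[X]+G(w)$ is insensitive to the components with $w^i=0$ are both correct.

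Two points in the converse direction are glossed over more than the phrase ``proof plan'' quite covers. First, $L^0_d$ with convergence in probability is not locally convex when $P$ is atomless, so one cannot separate a point from the closed cone $A_T$ in $L^0_d$ by pairing with bounded random vectors directly; Schachermayer's actual argument separates after intersecting with an appropriate integrability class (this is where the hypothesis $E[(X^TZ_T)^-]<\infty$ enters structurally). Second, the consistent price system you produce by separation must itself satisfy $E[(X^TZ_T)^-]<\infty$, since otherwise it is excluded from the index set of the intersection in \eqref{dualRMSchachi} and does not witness $x_0\notin SHP_0(X)$; likewise the perturbation toward $\ri K_t^+$ must preserve this integrability. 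Both issues evaporate on the finite probability space to which the rest of the paper immediately specializes, and both are resolved in the cited sources, so I would regard your proposal as a correct high-level account rather than as containing an error.
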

The function $X\mapsto SHP_0(-X)$ defines a closed set-valued coherent
market-compatible risk measure as introduced in \cite{HamelHeydeRudloff10}.
Equation \eqref{primalRM} is the primal representation of this risk measure and corresponds to the very definition of elements $x_0\in \R^d$ allowing to superhedge the claim $X\in L^0_d(\mathcal{F}_T,\R^d)$. Equation \eqref{dualRMSchachi} refers to the representation of superhedging portfolios with help of consistent price systems as in \cite{Kabanov99,KabanovStricker01,KabanovRasonyiStricker02,KabanovSafarian09,Schachermayer04}. 
Equation \eqref{dualRM} corresponds to the dual representation with help of vector probability measures which is closer to the scalar results and is in the spirit of set-valued coherent risk measure (see \cite{HamelHeydeRudloff10}). 

The {\em recession cone} of a convex set $A\subseteq \R^d$ is the set $A_\infty:=\{x\in\R^d: \forall t>0\;\; A+tx\subseteq A\}$, see e.g. \cite{Rockafellar97}.
%%%%%%%%%%%%%%%%%%
%%%%%%%%%%%%%%%%%%

\section{Recursive representation of the set of superhedging portfolios}
\label{sec_Recursive representation}

We will show that the set of superhedging portfolios $SHP_0(X)$ as defined in \eqref{primalRM} can be obtained by a recursive construction, going backward in the event tree, while the corresponding superhedging strategy can be obtained going forward in the tree. Based on this recursive structure, an algorithm to calculate superhedging portfolios and strategies will be developed.

Let $\Omega_t$ denote the set of atoms of $\mathcal F_t$ for $t\in\{0,1,\dots,T\}$. Let us consider a tree that represents the income of information, i.e., the nodes of the tree correspond to the atoms $\omega \in\Omega_t$ of $\mathcal F_t$. In the examples in section~\ref{sec_Examples} we will consider recombining trees such that the number of nodes does not grow exponentially with the number of time steps to ensure that the algorithm is computationally manageable. However, we do not need this assumption now.

A node $\bar\omega\in\Omega_{t+1}$ is called a successor node of $\omega\in\Omega_{t}$ $(t\in\{0,\dots,T-1\})$ if $\bar\omega\subseteq\omega$. The set of successor nodes of $\omega\in\Omega_{t}$ is denoted by
\[
	\suc (\omega)=\{\bar\omega\in\Omega_{t+1}:\bar\omega\subseteq\omega\}.
\]
Let the market model be described by an adapted stochastic process for the solvency cones $(K_t)_{t=0}^T$. The superhedging portfolios of a random payoff $X\in L^0_d(\mathcal{F}_T,\R^d)$ can be written in recursive form. 

\begin{theorem}
\label{theorem recursive SHP}
If the no arbitrage condition (NA) holds true, the set of superhedging portfolios $SHP_0(X)\subseteq\R^d$, defined in \eqref{primalRM}, of a claim $X\in L^0_d(\mathcal{F}_T,\R^d)$ satisfies
\begin{equation}\label{autokaputt}
 \emptyset \neq SHP_0(X) \neq \R^d
\end{equation}
and can be obtained recursively via
\begin{align}
\label{terminal condition SHP}
\forall\omega\in\Omega_T:\;\; &SHP_T(X)(\omega)=X(\omega)+K_T(\omega)
\\
\label{eqrecursive}
\forall t\in\{T-1,\dots,1,0\}, \forall\omega\in\Omega_{t}:\;\; &SHP_{t}(X)(\omega)=\bigcap_{\bar\omega\in\suc(\omega)} SHP_{t+1}(X)(\bar\omega)+K_t(\omega).
\end{align}
\end{theorem}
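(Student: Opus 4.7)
The plan is to proceed by backward induction on $t$, interpreting $SHP_t(X)(\omega)$ as the set of portfolio vectors $x\in\R^d$ such that, starting with $x$ at time $t$ on the atom $\omega\in\Omega_t$, a self-financing process $(V_s)_{s=t}^T$ (satisfying $V_s-V_{s-1}\in -K_s$ $P$-a.s.\ for $s\geq t$) with $V_{t-1}=x$ exists and satisfies $V_T=X$ $P$-a.s.\ on $\omega$. The base case $t=T$ is immediate: starting from $x$ at time $T$ in state $\omega$, a single trade in $-K_T(\omega)$ can produce a portfolio equal to $X(\omega)$ (using the throw-away convention built into $-K_T$) if and only if $x-X(\omega)\in K_T(\omega)$, giving \eqref{terminal condition SHP}.

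For the inductive step, fix $\omega\in\Omega_t$ and assume the representation at time $t+1$. For ``$\subseteq$'', take $x\in SHP_t(X)(\omega)$ with a witnessing process $(V_s)_{s=t}^T$, $V_{t-1}=x$. Since $V_t$ is $\mathcal{F}_t$-measurable it is constant on the atom $\omega$; call this value $y$. The self-financing condition at $t$ forces $x-y\in K_t(\omega)$, while the tail $(V_s)_{s=t+1}^T$ restricted to any successor $\bar\omega\in\suc(\omega)$ is a self-financing superhedging strategy starting from $y$. By the induction hypothesis $y\in SHP_{t+1}(X)(\bar\omega)$ for every $\bar\omega\in\suc(\omega)$, so $x=y+(x-y)\in\bigcap_{\bar\omega\in\suc(\omega)}SHP_{t+1}(X)(\bar\omega)+K_t(\omega)$. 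For ``$\supseteq$'', decompose $x=y+k$ with $y\in\bigcap_{\bar\omega\in\suc(\omega)}SHP_{t+1}(X)(\bar\omega)$ and $k\in K_t(\omega)$; induction supplies a self-financing superhedging strategy from $y$ on each successor $\bar\omega$, and since $\suc(\omega)$ partitions $\omega$ these glue into an $(\mathcal{F}_s)_{s=t+1}^T$-adapted process. Prepending the $\mathcal{F}_t$-measurable trade $V_t-V_{t-1}=-k\in -K_t(\omega)$ extends this into a self-financing superhedging strategy from $x$, so $x\in SHP_t(X)(\omega)$.

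For \eqref{autokaputt}, non-emptiness uses finiteness of $\Omega$: choose $x_0\in\R^d$ componentwise so large that $x_0-X(\omega)\in\R^d_+\subseteq K_T(\omega)$ for every $\omega\in\Omega_T$; then $X-x_0\in -L^0_d(\mathcal{F}_T,K_T)\subseteq A_T$, certifying $x_0\in SHP_0(X)$. For $SHP_0(X)\neq\R^d$, suppose the contrary; picking $x_0\in\R^d$ componentwise so negative that $X-x_0\in L^0_d(\mathcal{F}_T,\R^d_+)\setminus\cb{0}$ gives $X-x_0\in A_T\cap L^0_d(\mathcal{F}_T,\R^d_+)\setminus\cb{0}$, contradicting (NA).

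The only real subtlety I anticipate lies in the measurability bookkeeping in the inductive step: it is precisely the requirement that the trade $V_t-V_{t-1}$ be $\mathcal{F}_t$-measurable, hence constant on the atom $\omega$ and independent of which successor $\bar\omega\in\suc(\omega)$ is later realized, that forces the intersection (rather than a union) over $\suc(\omega)$ in \eqref{eqrecursive}; the remaining manipulations are routine consequences of the definition of a self-financing portfolio process and the convex-cone structure of $K_t$.
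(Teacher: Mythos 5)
Your proof is correct. For the main equivalence between the primal definition \eqref{primalRM} and the recursion \eqref{terminal condition SHP}--\eqref{eqrecursive}, your backward induction is essentially the paper's argument in a cleaner packaging: the paper unfolds the recursion globally, going forward through the tree to build a self-financing process and backward along paths for the converse, whereas you isolate the one-step statement (the $\mathcal F_t$-measurability of the trade forces a single post-trade portfolio $y$ valid for \emph{all} successors, whence the intersection) and let induction do the bookkeeping; the content is the same. Where you genuinely diverge is in \eqref{autokaputt}. The paper proves $SHP_0(X)\neq\R^d$ by invoking the fundamental theorem of asset pricing to obtain a consistent price system under (NA) and then appealing to the dual representation of Theorem~\ref{ThmSH}; your argument is more elementary and self-contained: if $SHP_0(X)=\R^d$, then shifting by a sufficiently negative $x_0$ (possible since $\Omega$ is finite, so $X$ is bounded) produces a nonzero element of $A_T\cap L^0_d(\mathcal F_T,\R^d_+)$, contradicting (NA) directly. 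This buys independence from the duality machinery of Section~2, at the cost of not extending verbatim to infinite probability spaces, where the paper's route via (NA$^{\text{r}}$) and consistent price systems is the one that generalizes. Your direct primal construction of a point of $SHP_0(X)$ for non-emptiness is likewise fine; note only that the paper's recession-cone observation additionally guarantees that every intermediate set $SHP_t(X)(\omega)$ is non-empty, which is used later (e.g.\ in Theorem~\ref{theorem SHstrategy}), though it also follows from your induction since each intermediate set contains all sufficiently large portfolios.
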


\begin{proof} 
Note that the intersection of finitely many non-empty polyhedral convex sets $A_1,\dots,A_m$ satisfying $(A_i)_\infty \supseteq \R^d_+$ for $i=1,\dots,m$ is non-empty (e.g. the component-wise maximum over a selection $a^i \in A_i$ $(i=1,\dots,m)$ belongs to the intersection). Since $K_t(\omega) \supseteq \R^d_+$ for all $t\in\cb{0,\dots,T}$ and all $\omega\in \Omega_t$, the sets $SHP_t(X)(\omega)$ defined by
\eqref{terminal condition SHP}, \eqref{eqrecursive} are non-empty.

Let us first show that the set $SHP_0(X)$ as defined by the recursive construction \eqref{terminal condition SHP}, \eqref{eqrecursive} is included in the set defined by \eqref{primalRM}.
Let $x_0\in SHP_0(X)$ as defined by the recursive construction \eqref{terminal condition SHP}, \eqref{eqrecursive}. Equation \eqref{eqrecursive} for $t=0$ is equivalent to the existence of a $V_0\in x_0-K_0$ such that $V_0\in\bigcap_{\omega\in\Omega_1} SHP_{1}(X)(\omega)$. In particular, $V_0\in SHP_{1}(X)(\omega)$ for all $\omega\in\Omega_1$. Using the definition of $SHP_{1}(X)(\omega)$, i.e. \eqref{eqrecursive} for $t=1$, there exists for each $\omega\in\Omega_{1}$ a $V_1(\omega)\in V_0-K_1(\omega)$ such that $V_1(\omega)\in SHP_{2}(X)(\bar\omega)$ for all $\bar\omega\in\suc (\omega)$. Going forward in the tree in the same manner and using \eqref{terminal condition SHP}, one can conclude that there exists for each $\bar\omega\in\Omega_{T}$  with $\bar\omega\in\suc (\omega)$ for some $\omega\in\Omega_{T-1}$ a $V_T(\bar\omega)\in V_{T-1}(\omega)-K_T(\bar\omega)$ such that $V_T(\bar\omega)=X(\bar\omega)$. That means, there exists a self-financing portfolio process $V_0,...,V_T$ with $V_T\in x_0+A_T$ and $V_T(\bar\omega)=X(\bar\omega)$, i.e. $X\in x_0+A_T$, which means $x_0$ is a superhedging portfolio of $X$ by \eqref{primalRM}.

For the reverse inclusion, take $x_0\in SHP_0(X)$ as defined in \eqref{primalRM}.
For a given path $(\omega_0,\dots,\omega_{T-1})$ with $\omega_t \in\Omega_{t}$ and $\omega_t\in\suc (\omega_{t-1})$ $(t=1,\dots,T-1)$, there exist $k_t\in K_t(\omega_t)$ such that
$x_0-k_0-...-k_{T-1}\in X(\bar\omega)+K_T(\bar\omega)$
for all $\bar\omega \in\Omega_{T}$ with $\bar\omega\in\suc (\omega_{T-1})$. Thus,
\[
    x_0-k_0-...-k_{T-2}\in \bigcap_{\bar\omega\in\suc(\omega_{T-1})} SHP_{T}(X)(\bar\omega)+k_{T-1},
\]
where $SHP_{T}(X)$ is defined as in \eqref{terminal condition SHP}. Since this holds for any $\omega_{T-1} \in\Omega_{T-1}$ with $\omega_{T-1}\in\suc (\omega_{T-2})$, we have
\[
    x_0-k_0-...-k_{T-2}\in \bigcap_{\bar\omega\in\suc(\omega)} SHP_{T}(X)(\bar\omega)+K_{T-1}(\omega)=SHP_{T-1}(X)(\omega)
\]
via \eqref{eqrecursive} for all $\omega \in\Omega_{T-1}$ with $\omega\in\suc (\omega_{T-2})$.
Proceeding like this reveals $x_0\in SHP_0(X)$ as defined by the recursive construction \eqref{terminal condition SHP}, \eqref{eqrecursive}.
This proves the equivalence of \eqref{primalRM} and  the recursive definition of $SHP_0(X)$ in \eqref{terminal condition SHP}, \eqref{eqrecursive}.
By the fundamental theorem of asset pricing, no arbitrage implies the existence of a consistent
price system $Z\in\mathbb{Z}$  (see \cite{KabanovStricker01}). Thus, $SHP_0(X)\neq \R^d$ follows from theorem~\ref{ThmSH}.
\end{proof}

\begin{remark}
The equivalence of \eqref{primalRM} and  the recursive definition of $SHP_0(X)$ in \eqref{terminal condition SHP}, \eqref{eqrecursive} also holds for probability spaces that are not necessarily finite, see remark 12 in \cite{FeiRud13} for a short and elegant proof using the recently developed multi-portfolio time consistency concept for dynamic set-valued risk measures and its equivalent characterization through recursiveness. For ease of notation one can write
\begin{align*}
&SHP_T(X)=X+K_T
\\
\forall t\in\{T-1,\dots,1,0\}:\;\; &SHP_{t}(X)=SHP_{t+1}(X)\cap L^0_d(\mathcal{F}_t,\R^d)+K_t.
\end{align*}
\end{remark}

Clearly, $SHP_t(X)(\omega)$ is the set of superhedging portfolios of $X\in L^0_d(\mathcal{F}_T,\R^d)$ at time $t$ at node $\omega\in\Omega_t$, $t \in\{0,1,...,T\}$ and \eqref{autokaputt} is satisfied likewise. As a consequence of theorem~\ref{theorem recursive SHP}, for all $t \in\{0,1,...,T\}$ and all $\omega\in\Omega_t$, $SHP_t(X)(\omega)$ is a non-empty polyhedral convex set. Under assumption~\eqref{assumption K_t} it satisfies
\begin{equation}\label{eq_rec}
\Int \of{SHP_t(X)(\omega)}_\infty \supseteq \R^d_+\setminus\cb{0}.
\end{equation}
The recursive structure \eqref{terminal condition SHP}, \eqref{eqrecursive} provides a geometric intuition for designing an algorithm to calculate the superhedging portfolios going backwards in the tree.
The operations involved are intersections of  polyhedra and the sum of polyhedral sets and polyhedral cones. These operations could be realized by methods from computational geometry, which are essentially based on the vertex enumeration problem, see e.g. \cite{BarDobHuh96,BreFukMar98}. For several reasons, discussed at the end of section \ref{sec_reformulationLVOP}, we reformulate the problem as a sequence of linear vector optimization problems. Note that for path independent payoffs and recombining trees, the recursive pricing procedure described in theorem~\ref{theorem recursive SHP}
grows only polynomial as the trading frequency increases.

Note that the same geometric intuition as in \eqref{terminal condition SHP}, \eqref{eqrecursive} appears if one is only interested in calculating the scalar superhedging price (i.e. the smallest superhedging price in a given currency or num\'{e}raire) when transaction costs are present. Even in the scalar case, it cannot be avoided to use the set-valued operations (intersection and sum of polyhedral sets). As a consequence of \eqref{eq_rec}, the polyhedral convex sets $SHP_t(X)(\omega)$ can be expressed as the epigraphs of piecewise linear functions $f:\R^{d-1}\rightarrow\R$. For example, in the two asset binomial model, this recursive structure can be rediscovered in the sequential problem $\mathcal Q_t$, p. 71 in \cite{BensaidLesnePagesScheinkman92}.

On the other hand, for the purpose of comparing our method with algorithms developed for calculating the scalar superhedging price based on a dual description as in \cite{RouxZastawniak08,RouxZastawniak09,Roux08} it is quite helpful to reformulate \eqref{terminal condition SHP}, \eqref{eqrecursive} as a sequence of linear vector optimization problems. As we will see in detail in section~\ref{sec_reformulationLVOP}, in each iteration step a solution of the dual vector optimization problem \cite{HeydeLoehne08} is used. As it will turn out, duality provides a link between \eqref{terminal condition SHP}, \eqref{eqrecursive} and above mentioned scalar algorithms and allows to recover the whole set of superhedging portfolios from intermediate results of the scalar algorithm. It is a somewhat surprising insight that the calculation of the set of superhedging portfolios is not more difficult than calculating the scalar superhedging price in markets with transaction costs, see section~\ref{sec_Scalarization}.

\begin{remark}
Note that it is sufficient to develop an algorithm for superhedging portfolios and superhedging strategies since the set of subhedging portfolios $SubHP_0(X)$ of a claim $X\in L^0_d(\mathcal{F}_T,\R^d)$ is just the negative of the set of superhedging portfolios of $-X$
 \[
    SubHP_0(X)=-SHP_0(-X).
 \]
The corresponding strategy for the buyer of a claim $X$ is to superhedge $-X$.
\end{remark}

\subsection{Calculation of the superhedging strategy}
\label{subsec_strategy}
From the proof of theorem~\ref{theorem recursive SHP}, one can see that the superhedging strategy, when starting from a particular element in the set of superhedging portfolios, can be calculated going forward in the tree. In the presence of transaction costs optimal superhedging strategies are, in general, path-dependent even for path-independent payoffs. But one only needs to compute the superhedging strategy along the realized path, one step at a time, in real time.

\begin{remark}
A self-financing trading strategy starting from an initial portfolio vector $x_0\in\R^d$ is a sequence $k_{0}, k_{1},...,k_T$ of trades at time $t=0,1,\dots,T$ with $k_{t}\in K_{t}$ for all $t$. The $j$-th component of $k_{t}$ gives the number of shares of asset $j$ to be bought/sold at time $t$ and paying the prevailing transaction costs as given by $K_{t}$. Of course, such a self-financing trading strategy is one-to-one with a self-financing portfolio process $V_0, V_{1},...,V_T$ via
\[
 V_0=x_0-k_0 \quad\text{ and }\quad \forall t\in\{1,\dots,T\} \colon\quad V_{t} = V_{t-1} -k_{t}
\]
as it can be seen from the definition of a self-financing portfolio process.
\end{remark}

\begin{corollary}
\label{theorem SHstrategy}
For a claim $X\in L^0_d(\mathcal{F}_T,\R^d)$ and a given initial portfolio vector $x_0\in\R^d$ with $x_0\in SHP_0(X)$, there exists a superhedging strategy for a path $(\omega_0,\dots,\omega_T)$ with $\omega_t \in\Omega_{t}$ and $\omega_t\in\suc (\omega_{t-1})$ $(t=1,\dots,T)$. Such a strategy is given by a self-financing portfolio process $V_0, V_{1},...,V_T$ satisfying
\begin{align}
\label{SHstrategy0}
V_0&\in\of{\cb{x_0}-K_0}\cap \bigcap_{\bar\omega\in\suc (\omega_0)} SHP_{1}(X)(\bar\omega),
\\
\label{SHstrategy}
	V_{t}&\in\of{\cb{V_{t-1}}-K_{t}(\omega_t)}\cap
	\bigcap_{\bar\omega\in\suc (\omega_t)} SHP_{t+1}(X)(\bar\omega),
\end{align}
for all $t\in\{1,...,T\}$.
\end{corollary}

\begin{proof}
The assertions follow from the recursion \eqref{terminal condition SHP}, \eqref{eqrecursive} and the proof of theorem~\ref{theorem recursive SHP}. In particular, it follows that the sets in \eqref{SHstrategy0}, \eqref{SHstrategy} are non-empty and thus the existence of a superhedging strategy follows.
\end{proof}

Note that a superhedging strategy is typically not uniquely determined. 
In the following we will shortly discuss how to choose one specific superhedging strategy. Since the strategies are in general superhedging and not replication strategies, it might be possible in certain scenarios to withdraw cash or assets without endangering the superhedging criteria. Thus, one possible criterion to choose a strategy might be to withdraw as much as possible of certain assets at intermediate points in time. Different criterions are possible and are discussed in detail in \cite{LoehneRudloff12OR}.

At each time $t=0,1,...,T$, choosing a superhedging strategy at a node $\omega\in\Omega_{t}$ with endowment $x=V_{t}(\omega)\in SHP_{t}(X)(\omega)$ means choosing a triplet $(v,y,k)$, where $v=V_{t+1}\in SHP_{t+1}(X)(\omega)$ is the portfolio after a trade $k\in K_{t}(\omega)$ and withdrawals of a portfolio $y\in\R^d$ at time $t$. That means, $V_{t+1}=V_{t}-y-k$ and the portfolio $V_{t+1}$ is hold from time $t$ to $t+1$ and presents the initial endowment (before trades are made) in the next iteration step.

Let us assume an investor following a superhedging strategy wants to withdraw as much of a certain portfolio $y\in\R^d_+\setminus\cb{0}$ as possible at each intermediate point in time.

Let $t \in \cb{1,\dots,T}$, $\omega \in \Omega_{t}$, $x \in SHP_{t}(X)(\omega)$ and let $\widetilde{K}_t(\omega)\in \R^{d \times s}$ be a matrix containing the $s$ generating vectors of $K_{t}(\omega)$. We assume that inequality representations of the polyhedral sets $SHP_{t+1}(X)(\omega)$ are known. An algorithm to compute them will be given in section \ref{sec_reformulationLVOP}. Solving the following LP with variables $(v,\alpha,z)\in \R^d \times \R \times \R^s$, we obtain a portfolio $\bar v$ according to corollary~\ref{theorem SHstrategy} which has the property that a maximal amount of the portfolio $y$ is withdrawn at time $t$.
If $t<T$, we take
\begin{equation}\label{eq_strat1a}
\max \alpha \quad \text{ s.t. } \; v \in  \bigcap_{\bar\omega \in \suc(\omega)} SHP_{t+1}(X)(\bar\omega),\; v + \alpha y + \widetilde{K}_t(\omega) z = x,\;  \alpha\geq 0\; ,z \geq 0.
\end{equation}
At time $T$, we solve
\begin{equation}\label{eq_strat1b}
 \max \alpha \quad \text{ s.t. } \;v \geq X(\omega),\; v + \alpha y + \widetilde{K}_t(\omega) z  = x,\;  \alpha\geq 0\; ,z \geq 0.
\end{equation}

From a solution $(\bar v,\bar\alpha,\bar z)$, we get the new portfolio $V_{t+1}=\bar v$. The portfolio $\bar y =\bar\alpha y\in \R^d_+$ describes the withdrawal and $k=\widetilde{K}_t(\omega) \bar z$ is the corresponding trade.

\begin{theorem}\label{th_strat1}
If the market satisfies the no arbitrage
property (NA), then there exists a solution $(\bar v,\bar\alpha,\bar z)$ for \eqref{eq_strat1a} and \eqref{eq_strat1b}.
\end{theorem}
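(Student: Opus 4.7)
The plan is to verify the two ingredients required for the LP to admit an optimal solution: feasibility of the constraints and boundedness of the objective $\alpha$ from above. Then standard LP theory (any feasible linear program with bounded objective attains its optimum) yields the result.

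\textbf{Feasibility.} I would argue directly from $x\in SHP_t(X)(\omega)$. For $t<T$, the recursive formula \eqref{eqrecursive} applied at $\omega$ gives a decomposition $x = v_0 + k_0$ with $v_0 \in \bigcap_{\bar\omega\in\suc(\omega)} SHP_{t+1}(X)(\bar\omega)$ and $k_0 \in K_t(\omega)$. Since $\widetilde K_t(\omega)$ collects a generating system of the polyhedral cone $K_t(\omega)$, we can write $k_0 = \widetilde K_t(\omega) z_0$ for some $z_0 \geq 0$. The triple $(v_0,0,z_0)$ then satisfies all constraints of \eqref{eq_strat1a}. For $t=T$ use the terminal condition \eqref{terminal condition SHP}: $x = X(\omega) + k_0$ with $k_0 \in K_T(\omega)$, so $(X(\omega),0,z_0)$ is feasible for \eqref{eq_strat1b}.

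\textbf{Boundedness.} This is the heart of the proof. Suppose toward a contradiction that $\alpha$ is unbounded along the feasible set. By standard LP theory there is a recession direction $(\Delta v,\Delta\alpha,\Delta z)$ of the feasible set with $\Delta\alpha>0$; in the case $t<T$ this means $\Delta v$ lies in the recession cone $R := \bigl(\bigcap_{\bar\omega\in\suc(\omega)} SHP_{t+1}(X)(\bar\omega)\bigr)_\infty$, $\Delta z \geq 0$, and $\Delta v + \Delta\alpha\, y + \widetilde K_t(\omega)\Delta z = 0$. Rearranging,
\[
 -\Delta\alpha\, y = \Delta v + \widetilde K_t(\omega)\Delta z \in R + K_t(\omega).
\]
Since the recession cone of a finite sum of polyhedral convex sets equals the sum of the recession cones, $R + K_t(\omega) = (SHP_t(X)(\omega))_\infty$. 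In the case $t=T$ the same computation with $R=\R^d_+ \subseteq K_T(\omega)$ yields $-\Delta\alpha\,y \in K_T(\omega) = (SHP_T(X)(\omega))_\infty$. In either case, as $\Delta\alpha>0$ and cones are closed under positive scaling, $-y \in (SHP_t(X)(\omega))_\infty$.

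\textbf{Contradiction via (NA).} By property \eqref{eq_rec}, $y \in \R^d_+ \setminus \{0\} \subseteq \Int (SHP_t(X)(\omega))_\infty$. A convex cone $C$ containing an interior point $y$ together with $-y$ must equal $\R^d$: for every $u \in \R^d$ we have $y + \varepsilon u \in C$ for small $\varepsilon>0$, hence $\varepsilon u = (y+\varepsilon u) + (-y) \in C$, so $u \in C$. Thus $(SHP_t(X)(\omega))_\infty = \R^d$, forcing $SHP_t(X)(\omega) = \R^d$. But under (NA), theorem~\ref{theorem recursive SHP} (and the remark extending \eqref{autokaputt} to all intermediate nodes) guarantees $SHP_t(X)(\omega) \neq \R^d$, a contradiction. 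The hard part is precisely this step: linking the direction of unboundedness of the LP to a nontrivial element of the recession cone of $SHP_t$ and then invoking no arbitrage through the dichotomy $\emptyset \neq SHP_t \neq \R^d$.

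Combining feasibility with a finite upper bound on $\alpha$, the linear programs \eqref{eq_strat1a} and \eqref{eq_strat1b} attain their maxima, yielding the required solution $(\bar v,\bar\alpha,\bar z)$.
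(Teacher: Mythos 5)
Your proof is correct and follows essentially the same route as the paper's: establish feasibility from the recursive structure, deduce from unboundedness that $-y$ lies in the recession cone $C=(SHP_t(X)(\omega))_\infty$, and combine this with $y\in\Int C$ from \eqref{eq_rec} to force $C=\R^d$ and hence $SHP_t(X)(\omega)=\R^d$, contradicting theorem~\ref{theorem recursive SHP}. The only difference is that you spell out the LP recession-direction argument that the paper compresses into the single line ``we obtain $y\in -C$.''
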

\begin{proof}
The feasible set is non-empty by corollary~\ref{theorem SHstrategy} and the fact that any nonnegative withdrawal $y$ belongs to $K_{t}(\omega)$. Assume there does not exist a solution $(\bar v,\bar\alpha,\bar z)$ for \eqref{eq_strat1a} or \eqref{eq_strat1b}, i.e. the value of the problem is $+\infty$. Denoting by $C=(SHP_t(X)(\omega))_\infty$ the recession cone of $SHP_t(X)(\omega)$, we obtain $y \in -C$. But $y \in \Int C$ by \eqref{eq_rec}. Hence $0 \in \Int C$ and thus $C=\R^d$. It follows that $SHP_t(X)(\omega)=\R^d$, which contradicts theorem~\ref{theorem recursive SHP}.
\end{proof}

An important special case is the {\em max-cash superhedging strategy}, which is obtained by setting $y=(1,0,\dots,0)^T$, where the first component is assumed to correspond to the cash account of interest.

\begin{example}
\label{example d=2, n=2} Let us consider a simple introductory example: a one period binomial model with non-constant proportional transaction costs, where the set of superhedging portfolios has multiple vertices. We will use this example to illustrate the algorithm and to explain differences between the scalar and the set-valued approach. Note that the transaction costs are chosen to be quite large, just for the purpose of obtaining illustrative pictures.

Let asset $0$ be a riskless cash account and let us assume for simplicity that interest rates are zero. Asset $1$ is a risky stock, whose bid-ask prices $(S_t^b,S_t^a)$ at time $t=0$ and $t=T$ are modeled as follows:
\begin{center}
\pstree[treemode=R]{\TR{(18,\;25)}}{\TR{(20,\;26)}\TR{(16,\;23).}}
\end{center}
We consider a digital option, more specifically an asset or nothing call option with physical delivery and strike $K=24$.
The payoff is given by
\[
X\of{\omega} = \of{X_1\of{\omega}, X_2\of{\omega}}^T = (0, I_{\cb{S_T^a \geq
K}}\of{\omega})^T.
\]
Thus, the payoff in the up-node is $X(\omega_1)=(0, 1)^T$ and in the down node
$X(\omega_2)=(0, 0)^T$. The calculation of the set of superhedging portfolios by the recursive procedure described in theorem~\ref{theorem recursive SHP} and illustrated in figure~\ref{ex_toy3} reveals that $SHP_0(X)$ has two vertices, one at $(0,1)^T$ and one at $(-80,5)^T$ and a recession cone equal to the solvency cone $K_0$ at
initial time which is generated by $\of{-18, 1}^T$ and $\of{25, -1}^T$. Figure~\ref{ex_toy2} shows the set of sub- and superhedging portfolios.
The scalar superhedging price is given by $25$ units cash and corresponds to the buy and hold strategy that superreplicates the claim. The strategy is to transfer the initial position $(25,0)^T$ into the vertex $(0,1)^T$ of $SHP_0(X)$ at initial time and hold this portfolio until terminal time.
However, the knowledge of the scalar superhedging price and the corresponding strategy
does not give information about optimal strategies if one already owns some shares of the
stock. For example, if one owns $5$ stocks and is short $80$ units cash at initial time
(the portfolio corresponding to the second vertex of $SHP_0(X)$), one cannot reach
the scalar superhedging price, that is the portfolio $\of{25, 0}^T$, by selling the stock
at initial time. This is illustrated in figure~\ref{ex_toy}. Therefore, the portfolio $(-80,5)^T$ does not allow to superreplicate if initial positions are "cash only" positions.
Consequently, if initial positions in several eligible assets (here stock and cash) are
allowed instead of only one (like cash) the cost of superreplication can be reduced:
The portfolio $(-80,5)^T \in SHP_0(X)$ clearly allows to superreplicate (in this example even to replicate) the
claim. The scalar superhedging price gives information about price bounds, but for the purpose of actually carrying out a superhedging strategy, only the set of all superhedging portfolios gives full information on optimal strategies.
\end{example}
\begin{figure}[hpt]
\center
\input{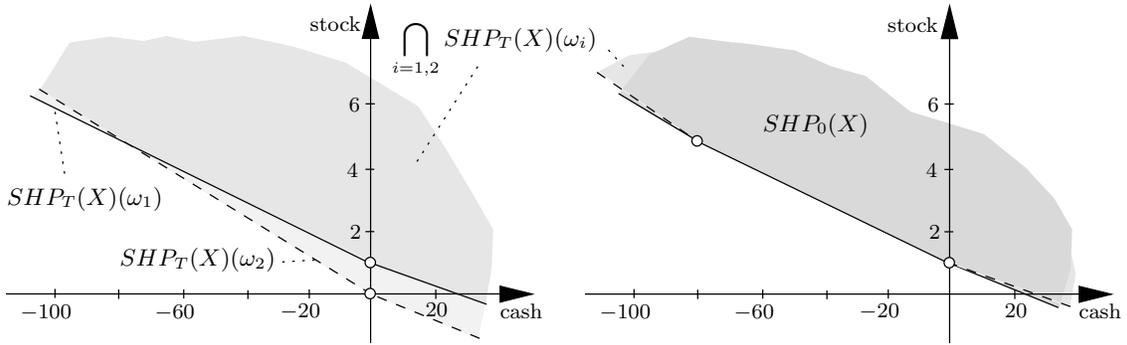}
\caption{Example~\ref{example d=2, n=2}: Illustration of the recursive algorithm \eqref{terminal condition SHP}, \eqref{eqrecursive} of theorem~\ref{theorem recursive SHP}.}
\label{ex_toy3}
\end{figure}

\begin{figure}[hpt]
\center
\input{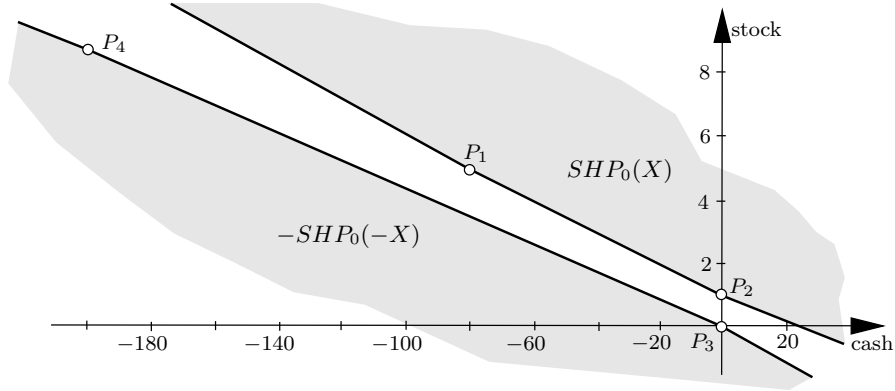}
\caption{Example~\ref{example d=2, n=2}: The set $-SHP_0(-X)$ of subhedging portfolios and the set $SHP_0(X)$ of superhedging portfolios.}
\label{ex_toy2}
\end{figure}

\begin{figure}[hpt]
\center
\input{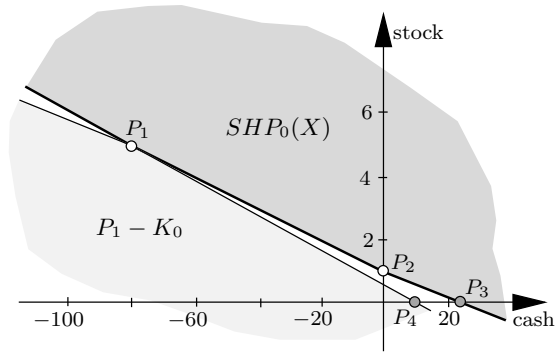}
\caption{Example~\ref{example d=2, n=2}: $P_1$ can be exchanged into $P_4=(10,0)^T$, but not into the scalar superhedging price $P_3=(25,0)^T$.}
\label{ex_toy}
\end{figure}

\section{Connection to linear vector optimization and algorithms to compute the recursive representation}
\label{sec_reformulationLVOP}

We will show that the recursive problem \eqref{terminal condition SHP}, \eqref{eqrecursive} can be formulated as a sequence of linear vector optimization problems, which can be solved by a generalization of Benson's algorithm as provided in \cite{HamLoeRud13}.
Furthermore, existing algorithms \cite{Roux08,RouxZastawniak08} for the scalar superhedging price as well as their generalization to the case of more than two assets are related to this sequence of vector optimization problems via vectorial duality \cite{HeydeLoehne08,Loehne11,HamLoeRud13}, see section \ref{sec_Scalarization}.

Consider a linear vector optimization problem with a $q$-dimensional objective function. The image space $\R^q$ is partially ordered by a polyhedral convex cone $C \subseteq \R^q$ that contains no lines and has non-empty interior. For $y,z \in \R^q$ we write $y \leq_C z$, or shortly $y \leq z$, if $z-y \in C$.
We consider the problem to
\begin{equation}\tag{P}\label{p}
 \text{ minimize } P:\R^d\to\R^q \text{ with respect to } \le_C \text{ subject to } Bx \geq b,
\end{equation}
where $B \in \R^{m\times d}$, $b \in \R^m$ and $P$ is linear, i.e. $P \in \R^{q \times d}$. The feasible set of \eqref{p} is $S:=\{x\in \R^d\colon B x \ge b\}$.
The dual problem to \eqref{p} \cite{HeydeLoehne08, Loehne11} is
\begin{equation*}\label{d}
\tag{D$^*$} \text{ maximize } D^*: \R^m \times \R^q \to \R^q \text{ with respect to } \leq_K \text{ over } T,
\end{equation*}
with (linear) objective function
\[ D^*:\R^m\times\R^q \to \R^q,\quad D^*(u,w):=\of{w_1,...,w_{q-1},b^T u}^T,\]
ordering cone $K:=\R_+ \cdot (0,0,\dots,0,1)^T$,
and feasible set
\[ T:=\cb{(u,w)\in \R^m \times \R^q \colon u \geq 0,\; B^T u = P^T w,\; c^T w = 1,\; w \in C^+},\]
where $c$ is a fixed vector in $\Int C$ and $C^+:=\cb{w\in \R^q\colon \forall y\in C:\; w^T y \geq 0}$ is the dual cone of $C$. Benson's algorithm \cite{Benson98a,EhrLoeSha11,Loehne11,HamLoeRud13} 
can be used to compute solutions to both the primal and the dual problem. 

In each iteration step of \eqref{terminal condition SHP}, \eqref{eqrecursive} a linear vector optimization problem needs to be solved which in turn means, that the calculation of the set $SHP_0(X)$ means solving a sequence of linear vector optimization problems.
The objective function $P$ at time $t$ is the liquidation map, which corresponds to an exchange of those assets with no transaction costs at time $t$ into a single asset.
For instance, if there are no transaction costs between assets $i$ and $j$ at time $t$, the matrix
\begin{equation}
\label{liquidation map}
    P_{ji}= \of{e^1,\dots,e^{i-1},e^i+\pi^{ij}e^j ,e^{i+1},\dots,e^{j-1},e^{j+1},\dots,e^d}^T
\end{equation}
is considered, where $\pi^{ij}$ is the exchange rate, the price of asset $j$ in terms of asset $i$ at time $t$. If $x$ is a portfolio with $d$ assets then $P_{ji}x$ is a portfolio with $d-1$ assets which is obtained from $x$ by exchanging asset $j$ in asset $i$ without transaction costs. The same procedure is applied to the new portfolio if there are further pairs of assets having no transaction costs. This process determines the liquidation map $P$ for the solvency cone $K_t(\omega)$, $\omega\in\Omega_t$. At the end we have transaction costs between any two assets. Let us denote the matrix containing the (finite number of) generating vectors of $K_t(\omega)^+$ as columns by $\widetilde K_t(\omega)^+$ for $\omega\in\Omega$ and $t\in\{0,...,T\}$. The generating vectors of $K_t(\omega)^+$ can be obtained by vertex enumeration from the generating vectors of $K_t(\omega)$, or by the method described in \cite{LoeRud13}. 
The notion $B=\{B_i, i\in I\}$ stands for the matrix $B\in \R^{\sum_{i\in I}m_i\times d}$ containing the rows of all the matrices $B_i \in \R^{m_i\times d}$, $i\in I$, for some index set $I$.

\begin{theorem}
\label{algo}
For $\omega \in \Omega_T$ it holds $SHP_T(X)(\omega)=\cb{x \in \R^d \colon B^\omega_T x \geq b^\omega_T}$ with $B^{\omega}_T=(\widetilde K_T(\omega)^+)^T$ and $b^{\omega}_T=B^\omega \cdot X(\omega)$.

For each $t$ from $T-1$ down to $0$ and each $\omega\in\Omega_{t}$ consider the linear vector optimization problem given by
\begin{align*}
P&=\text{LiquidationMap}(K_t(\omega)),\\
B&=\cb{B^{\bar \omega}_{t+1} \colon \bar \omega \in \suc(\omega)},\\
b&=\cb{b^{\bar \omega}_{t+1} \colon \bar \omega \in \suc(\omega)},
\end{align*}
and ordering cone $C=P \cdot K_t(\omega))$.
Let $\cb{(u^1,w^1),\dots,(u^k,w^k)}$be a solution to the dual problem \eqref{d} of the above linear vector optimization problem, and set
$B^\omega_t=(P^T w^1 ,\dots  ,P^T w^k)^T$ and $b^\omega_t= (b^T u^1 , \dots ,b^T u^k)^T$. Then,
\[
SHP_t(X)(\omega)=\cb{x \in \R^d \colon B^\omega_t x \geq b^\omega_t}.
\]
\end{theorem}

\begin{proof}
The inequality representation of $SHP_T(X)(\omega)$ is easy to see. Given an inequality representation of $SHP_{t+1}(X)(\omega)$, $t\in\{0,...,T-1\}$, the recursive form \eqref{terminal condition SHP}, \eqref{eqrecursive} of theorem~\ref{theorem recursive SHP} ensures that $SHP_{t}(X)(\omega)$ can be written as the back transformation (w.r.t. the liquidation map) of the upper image $P[S]+C$, where $S:=\cb{x\colon Bx\geq b}$, of the above vector optimization problem. Strong duality (see 
\cite{HamLoeRud13})
between the primal problem \eqref{p} and its dual problem \eqref{d} is satisfied since the feasible sets $S$ of \eqref{p} and $T$ of \eqref{d} are non-empty, which follows from \eqref{autokaputt}. Strong duality ensures that a solution to the dual problem \eqref{d} leads to an inequality representation of the upper image $P[S]+C$ of the primal problem \eqref{p}, and, using the back transformation of the liquidation map, to an inequality representation of  $SHP_t(X)(\omega)$ as given above, see \cite{HamLoeRud13,Heyde11}.
\end{proof}
The liquidation map is in place to ensure that the ordering cone $P \cdot K_t(\omega)$ contains no lines, which in turn makes it possible to apply Benson's algorithm.
If the bid-ask spread is strictly positive between any two assets at time $t$, then the liquidation map $P$ reduces to the identity and
$SHP_{t}(X)(\omega)$ is just
\[
SHP_{t}(X)(\omega)=\{x: Bx\geq b\}+K_t(\omega).
\]
In this case we deal with a very special linear vector optimization problem, which could be solved by vertex enumeration only. Recall that also in the general case, the recursive structure \eqref{terminal condition SHP}, \eqref{eqrecursive} consists of the following operations: intersections of  polyhedra and the sum of polyhedral sets and polyhedral cones, which could be realized by methods from computational geometry, essentially based on the vertex enumeration problem, see e.g. \cite{BarDobHuh96,BreFukMar98}. This direct calculation is also proposed in \cite{RouxZastawniak11} in a more general framework.
However, in practice one has to deal with numerical inaccuracy and tractability of the problem. The reformulation as a sequence of linear vector optimization problems and the usage of Benson's algorithm can have numerical advantages in both cases,
 since the polyhedra at intermediate steps of the algorithm can be approximated with error bounds that can be chosen, see remark~4.10 in~\cite{HamLoeRud13}. Without this approximation the number of vertices can grow rapidly and make the problem numerically intractable. Note that Benson's algorithm yields both outer and inner approximations, which allows to control the approximation error.
 
To illustrate this, let us assume for the moment that $P$ is the identity. Benson's algorithm in~\cite{HamLoeRud13} with pre-specified error level $\epsilon>0$ and parameter $c\in\Int K_t(\omega)$ applied to the linear vector optimization problem given in theorem~\ref{algo} leads to an inner and outer $\epsilon$-approximation of the set $SHP_{t}(X)(\omega)$ by providing a set $\overline{SHP}_{t}(X)(\omega)$ satisfying 
\[
	\overline{SHP}_{t}(X)(\omega)-\epsilon\{c\}\supseteq SHP_{t}(X)(\omega)\supseteq \overline{SHP}_{t}(X)(\omega).
\] 
 It is also important to study the propagation and accumulation of the approximation errors over time as now at time $t-1$ instead of the true input $SHP_{t}(X)(\omega$) the inner $\epsilon$-approximation $\overline{SHP}_{t}(X)(\omega)$ is used, and at each time point a new approximation error might be made. However, the following observations can be made, see \cite{FR13Bellman}: An $\epsilon$-error in the input leads to an $\epsilon$-approximation of the output providing it was calculated exactly; if the output is only calculated with approximation error $\delta>0$, the total approximation error is $\epsilon+\delta$. That means, overall one is able to calculate at time $t=0$ an $\epsilon\cdot T$-approximation of $SHP_0(X)$, that is a set $\overline{SHP}_{0}(X)$ satisfying 
\[
	\overline{SHP}_{0}(X)-\epsilon\cdot T\{c\}\supseteq SHP_{0}(X)\supseteq \overline{SHP}_{0}(X),
\] 
when using the same error level $\epsilon>0$ and parameter $c\in\Int \R^d_+$ for all times $t$ and states $\omega\in\Omega_t$.

\section{Examples}
\label{sec_Examples}
If one is interested in the superhedging portfolios for initial positions in only a few of the $d$ assets, for example in just a few currencies, or even in just one currency, one would calculate
\begin{equation}
\label{SHPM}
	SHP_0^M(X)=SHP_0(X)\cap M,
\end{equation}
for $M=\{\sum_{i\in I} s_i e^i, s_i\in\R\}$, where the assets $i\in I$ with $I\subseteq\{1,...,d\}$ are the ones of interest. This simply involves one more operation (the intersection) in the algorithm. The introduction of $M$ is quite useful if the number of assets $d$ is very high, one owns just a few of the $d$ assets and thus is interested in superhedging portfolios just starting with those assets, or if $SHP_0(X)$ is too complex to be visualized. In particular, the $i$-th component of a vertex of $SHP_0^M(X)$ for $M=\{se^i, s\in\R\}$ coincides with the smallest superhedging prices if asset $i\in\{1,...,d\}$ is chosen as the num\'{e}raire, thus coinciding with the scalar superhedging price $\pi^a_i(X)$.

A second possibility to calculate the scalar superhedging price $\pi^a_i(X)$ w.r.t. a  num\'{e}raire asset $i$ is by normalizing the inequality representation $SHP_0(X)=\{x \in \R^d \colon B x \geq b\}$ obtained by the algorithm in section~\ref{sec_reformulationLVOP} in the following way: Transform $B x \geq b$ into $\widetilde{B} x \geq \widetilde{b}$ such that each element in the $i$th row of $\widetilde{B}$ is equal to $1$. This is always possible by \eqref{eq_rec} and $SHP_0(X)\neq \R^d$. Then, the largest component of $\widetilde{b}$ is $\pi^a_i(X)$ w.r.t. the  num\'{e}raire asset $i$. This relation also plays a role in section~\ref{subsec_scalar problem as LVOP}.
Both methods provide a way to compare our results with the algorithm for the $d=2$ case of \cite{RouxZastawniak08,Roux08} for calculating the scalar superhedging price when the num\'{e}raire asset is the riskless asset (see example \ref{example d=2, multiperiod}).

\subsection{Two asset case}

\begin{example}
\label{example d=2, multiperiod, digital option}
Let us consider a digital option similar to example~\ref{example d=2, n=2}, but in a multi-period framework and smaller transaction costs.
Let asset $1$ be a riskless bond $B$ with an annual interest rate of $3\%$, face value $B_T=1$, maturity one year, frequent compounding with $n=100$ time intervals, i.e. $B_0=(1+\frac{r}{n})^{-n}$ and no transaction costs for the bond, i.e. $B_t^b=B_t^a=B_t$ for all $t$.
Let the mid-market stock price $S$ follow a Cox-Ross-Rubinstein binomial model,
\[
    S_t=\varepsilon_tS_{t-1},
\]
for $t=1,...,T$, where $\varepsilon_1,\varepsilon_2,...$ is a sequence of independent identically distributed random variables taking two possible values $e^{\sigma\Delta t}$ or $e^{-\sigma\Delta t}$, where $\Delta t$ is the length of one time step. The initial stock price is $S_0 = 18$, volatility $\sigma =0.2$, and transaction costs are constant $\lambda = 0.04$. Let the bid and the ask price at time $t$, respectively, be given by
\begin{equation}
\label{example bidask}
    S_t^b=S_t(1-\lambda),\quad\quad\quad S_t^a=S_t(1+\lambda).
\end{equation}
An asset or nothing call option with physical delivery, maturity $1$ year, strike $K=19$ and
payoff
\[
X\of{\omega} = \of{X_1\of{\omega}, X_2\of{\omega}}^T = \of{0, I_{\cb{S_T^a \geq K}}\of{\omega}}^T
\]
is considered. The set $SHP_0(X)$ has two vertices, one at $(0,1)^T$ and one at $(-24.92,2.39)^T$ and a recession cone equal to the solvency cone $K_0$ at initial time which is generated by $\of{-S_0^b/B_0, 1}^T$ and $\of{S_0^a/B_0, -1}^T$.

The scalar superhedging price in the num\'{e}raire asset (the bond) is $\pi^a_0(X)=19.29$ units of the bond, which corresponds to a scalar superhedging price in the domestic currency of $\pi^a(X)=18.72=S_0^a$ and the corresponding strategy is the buy and hold strategy. Note that in contrast to the trivial strategy one obtains for the scalar superhedging, the superhedging strategy can be more involved, when the initial portfolio vector is not cash-only.
\end{example}

\begin{example}
\label{example d=2, multiperiod}
Let asset $1$ be a riskless bond $B$ with an effective interest rate of $r_e=10\%$, frequent compounding, face value $B_T=1$, maturity $1$ year, i.e. $B_0=(1+r_e)^{-1}$, and no transaction costs for the bond, i.e. $B_t^b=B_t^a=B_t$ for all $t$.
Let the stock price $S$ follow a Cox-Ross-Rubinstein binomial model as in example~\ref{example d=2, multiperiod, digital option}. The initial stock price is $S_0 = 100$, volatility $\sigma =0.2$, maturity $1$ year and transaction costs are constant $\lambda = 0.00125$. Let the bid and ask prices at time $t$ be given as in \eqref{example bidask}. Consider a call option with maturity $1$ year, physical delivery and strike $K=80$ whose payoff is a function of the mid-market price, i.e.
\[
X\of{\omega} = \of{X_1\of{\omega}, X_2\of{\omega}}^T = \of{-KI_{\cb{S_T > K}}\of{\omega}, I_{\cb{S_T > K}}\of{\omega}}^T.
\]
The set of superhedging and subhedging portfolios is given by its vertices and recession cones.
For different values of $n$, the vertices (in units of bond and stock) are recorded in table~\ref{tabelle}.
The recession cone of $SHP_0(X)$ is always $K_0$, generated by $\of{-S_0^b/B_0, 1}^T$ and $\of{S_0^a/B_0, -1}^T$, whereas the recession cone of $-SHP_0(-X)$ is $-K_0$.
The scalar price bounds $\pi^b(X),\pi^a(X)$ in the domestic currency are also recorded in table~\ref{tabelle}.

For comparison purpose, we also give the scalar price bounds $\pi^b(X),\pi^a(X)$ if there are no transaction costs at $t=0$ as considered in \cite{RouxZastawniak08,Roux08,BoyleVorst92,Palmer01}. We are able to replicate the scalar results as given in table~1 of \cite{RouxZastawniak08} and table~3.1 and 3.2 of \cite{Roux08}, where the different values of the parameters $K$ and $\lambda$ are $K\in\{80,90,100,110,120\}$ and $\lambda\in\{0\%,0.125\%,0.5\%,0.75\%,2\%\}$. Let $n$ be the number of time intervals (that is $n=T$).
Minor deviations (all less than $0.001$) from table~3.2 of \cite{Roux08} appear in a few instances for the bid prices in the $n=1000$ case and one deviation of $0.014$ for $n=1000$ that is recorded in table~\ref{tabelle}.
The case $n=1800$ was not considered in \cite{RouxZastawniak08,Roux08}. Here, we just present the results for $K=80$ and $\lambda=0.125\%$. We used the dual variant of Benson's algorithm in \cite{HamLoeRud13} with a precision of $\epsilon=10^{-7}$, which according to the remarks at the end of section~\ref{sec_reformulationLVOP} leads to an overall precision of $\epsilon n$.
\begin{table}[h!]\scriptsize
\begin{tabular}{|c|c|c|c|c|c|c|}
            \hline
            \multicolumn{7}{|c|}{$\lambda=0.125\%$ for all $t$} \\\hline
            $n$ & $6$ & $13$ & $52$ & $250$ & $1000$ & $1800$\\\hline
             &  &  &  &  &  & \\
            vertex of $-SHP_0(-X)$&  $\left(
                                      \begin{array}{r}
                                        -74.434 \\
                                        0.953 \\
                                      \end{array}
                                    \right)$ & $\left(
                                                    \begin{array}{r}
                                                               -74.699 \\
                                                               0.956 \\
                                                             \end{array}
                                                           \right)$  & $\left(
                                                                            \begin{array}{r}
                                                                                       -75.477
                                                                                       \\
                                                                                       0.962 \\
                                                                                     \end{array}
                                                                                   \right)$&  $\left(
                                                                                                \begin{array}{r}
                                                                                                  -76.348
                                                                                                  \\
                                                                                                  0.969 \\
                                                                                                \end{array}
                                                                                              \right)$&  $\left(
                                                                                                \begin{array}{r}
                                                                                                   -78.049
                                                                                                    \\
                                                                                                   0.983 \\
                                                                                                \end{array}
                                                                                              \right)$
                                                                                              &  $\left(
                                                                                                \begin{array}{r}
                                                                                                   -79.049
                                                                                                   \\
                                                                                                   0.992 \\
                                                                                                \end{array}
                                                                                              \right)$
                                                                                   \\
&  &  &  &  & &\\
            lower price bound $\pi^b(X)$ & $27.552$ &  $27.537$ & $27.462$ & $27.381$ & $27.249$ & $27.191$\\
            &  &  &  &  &  &\\
            vertex of $SHP_0(X)$&  $\left(
                                    \begin{array}{r}
                                      -73.814 \\
                                      0.948 \\
                                    \end{array}
                                  \right)$ & $\left(
                                                \begin{array}{r}
                                                  -73.857 \\
                                                  0.949 \\
                                                \end{array}
                                              \right)$  & $\left(
                                                             \begin{array}{r}
                                                               -73.857 \\
                                                               0.949 \\
                                                             \end{array}
                                                           \right)$  &  $\left(
                                                                           \begin{array}{r}
                                                                             -72.856 \\
                                                                             0.941 \\
                                                                           \end{array}
                                                                         \right)$ &  $\left(
                                                                                                \begin{array}{r}
                                                                                                   -71.244
                                                                                                  \\
                                                                                                   0.929 \\
                                                                                                \end{array}
                                                                                              \right)$
                                                                                              &  $\left(
                                                                                                \begin{array}{r}
                                                                                                   -70.209
                                                                                                  \\
                                                                                                   0.921 \\
                                                                                                \end{array}
                                                                                              \right)$
                                                                                              \\
&  &  &  &  & & \\
            upper price bound  $\pi^a(X)$ &  $27.854$ & $27.866$ & $27.872$ & $27.994$ & $28.213$ & $28.370$ \\\hline
            \multicolumn{7}{|c|}{} \\
            \multicolumn{7}{|c|}{$\lambda=0.125\%$ for $t=1,...,T$, but no transaction costs at $t=0$ as in \cite{RouxZastawniak08,Roux08,BoyleVorst92,Palmer01}}\\\hline
            $n$ & $6$ & $13$ & $52$ & $250$ & $1000$ & $1800$ \\\hline
            &  &  &  &  &  & \\
            lower price bound  $\pi^b(X)$ & $27.671$ & $27.656$  & $27.582$  & $27.502$ & $27.372^a$ & $27.315^b$ \\
            upper price bound  $\pi^a(X)$ & $27.735$ & $27.747$  &  $27.753$ & $27.876$ & $28.097$ & $28.255^b$ \\
            \hline
            \multicolumn{7}{|c|}{} \\
            \multicolumn{7}{|l|}{$^a$ differs from value $27.386$ in \cite{Roux08}}\\
            \multicolumn{7}{|l|}{$^b$ not considered in \cite{RouxZastawniak08,Roux08}}\\
            \hline
          \end{tabular}
          \caption{set-valued and scalar sub- and superhedging portfolios of European call options}
          \label{tabelle}
\end{table}

Note that, if the bond is chosen as the num\'{e}raire asset, the scalar superhedging price $\pi^a_1(X)$ is given in units of the bond, and one needs to multiply it by $B_0$ to obtain the scalar superhedging price $\pi^a(X)$ in the domestic currency that is recorded in table~\ref{tabelle}. It is worth pointing out that there are parameter constellations that lead to multiple vertices for the set of superhedging or subhedging portfolios. For example, $-SHP_0(-X)$ for $\lambda=2\%$, $K=110$ and $n=52$ has $8$ vertices given by the columns of the following matrix
\[\left(
  \begin{array}{rrrrrrrr}
    -34.743 & -48.097 & -79.757 & -88.323 & -91.778 & -84.331 & -54.520 & -41.461 \\
    0.322 & 0.445 & 0.732 & 0.809 & 0.840 & 0.774 & 0.504 & 0.384 \\
  \end{array}
\right)
\]
with a scalar subhedging price of  $\pi^b(X)=-0.023$ (in the domestic currency) if transaction costs are considered at all time points, and $\pi^b(X)=0.865$ if no transaction costs are considered at $t=0$.
The set $-SHP_0(-X)$ for $\lambda=2\%$, $K=110$ and $n=250$ has $3$ vertices given by
\[\left(
  \begin{array}{rrr}
    2.370 & -107.125 & -110.107 \\
    -0.036 & 0.974 & 1.001 \\
  \end{array}
\right)
\]
with a scalar subhedging price of  $\pi^b(X)=-1.546$ if transaction costs are considered at all time points, and $\pi^b(X)=-0.038$ if no transaction costs are considered at $t=0$. Note that negative bid prices might occur when physical delivery is considered in markets with transaction costs. This issue was discussed and resolved in \cite{PerrakisLefoll97}, see also remark~3.30 in \cite{Roux08}.
\end{example}

\subsection{Multiple correlated assets and basket options}
\label{subsec_corr assets}
We are interested in the set of superhedging portfolios of options involving multiple correlated assets. We will use a multi-dimensional tree that approximates a $d-1$-dimensional Black Scholes model for $d-1$ risky assets, where the stock price dynamics under the risk neutral measure $Q$ are given by
\[
dS^i_t=S^i_t(r dt+\sigma_i dW^i_t),\quad\quad i=1,...,d-1
\]
for Brownian motions $W^i$ and $W^j$ with correlation $\rho_{i,j}\in[-1,1]$ for $i\neq j$.
We will follow the method in \cite{KornMueller09} to set up a tree for the correlated risky assets by transforming the stock price process $S$ into a process $Y$ with independent components. This tree will have $2^{d-1}$ branches in each node and will be recombining with $(t+1)^{d-1}$ nodes at time $t$ with $t\in\{0,1,...,T\}$.
 Thus, a node can be identified by an index $(t,j_1,...,j_{d-1})$ for $t\in\{0,1,...,T\}$ and $1\leq j_i\leq t+1$ for all $i\in\{1,...,d-1\}$. For $d=3$ the nodes at time $t$ can be described by the indices $(j_1,j_2)$ of the elements of a matrix $M_t\in\R^{(t+1)\times(t+1)}$. The values of the process $Y$ at such a node can be obtained as follows.

Let $\Sigma$ be the covariance matrix of the log asset prices and $GG^T=\Sigma$ be the Cholesky decomposition of $\Sigma$. Let $n$ be the number of time intervals and $\Delta t$ the length of one time interval. Let us denote $\alpha=G^{-1}(r-\frac{1}{2}\sigma^2)$. The initial value of the process $Y$ is given by $Y_0=G^{-1}(X_0)$ with $X_0=(\log(S_0^1),...,\log(S_0^{d-1}))$. The value of the process $Y$ at node $(t,j_1,...,j_{d-1})$ is given by
\begin{align}
\label{YKorn}
    Y_{t}^i= Y_0^i+t\alpha_i\Delta t +(2j_i-t-2)\sqrt{\Delta t},\quad\quad i = 1,...,d-1.
\end{align}
for $t\in\{0,1,...,T\}$ and $1\leq j_i\leq t+1$. We omit the index $(j_1,...,j_{d-1})$ for $Y_{t}(j_1,...,j_{d-1})$ and hope not to cause confusion. The value of the original stock price vector $S$ at this node $(t,j_1,...,j_{d-1})$ is
\begin{align}\label{transfoKorn}
    S_{t}^i=\exp^{(GY_{t})^i},\quad \quad i=1,...,d-1.
\end{align}
Now, let us assume for simplicity that the proportional transaction costs are constant for each of the risky assets and are given by $\lambda=(\lambda^1,...,\lambda^{d-1})$. Thus the bid and ask prices at node $(t,j_1,...,j_{d-1})$ are given by
\begin{align}\label{transfoBidAsk}
    (S_{t}^b)^i=S_{t}^i(1-\lambda^i)\quad\quad (S_{t}^a)^i=S_{t}^i(1+\lambda^i),\quad \quad i=1,...,d-1.
\end{align}
Furthermore, let us assume there is a riskless asset with dynamics $(B_t)_{t=0}^T$. Transaction costs in the riskless asset can be incorporated by considering bid-ask prices $B_t^b\leq B_t^a$ for $t = 0, 1, ..., T$.
For $d=3$, if both risky assets are denoted in the domestic currency (the currency of the riskless asset), if $\lambda^1, \lambda^2>0$, and if we assume an exchange between the two risky assets can not be made directly, only via cash by selling one asset and buying the other, we obtained the following tree model for the solvency cone process $K_{t}$. At node $(t,j_1,j_2)$ the generating vectors of the solvency cone are given by $\pi^{ij}e^i-e^j$, $0\leq i,j\leq 2$ (see section~\ref{sec_Preliminaries}), i.e., by the columns of the matrix
\begin{align}\label{transfoBidAskKt}
\left(
  \begin{array}{cccccc}
    \frac{(S_{t}^a)^1}{B_t^b} & -\frac{(S_{t}^b)^1}{B_t^a} &\frac{(S_{t}^a)^2}{B_t^b} & -\frac{(S_{t}^b)^2}{B_t^a}&0&0\\
    -1 & 1 & 0 & 0&1&-1 \\
    0 & 0 & -1 & 1 & -\frac{(S_t^b)^1}{(S_t^a)^2} & \frac{(S_t^a)^1}{(S_t^b)^2}\\
  \end{array}
\right).
\end{align}
If there are no transaction costs for the riskless asset, i.e. $B_t^b=B_t^a$, the last two generating vectors
in \eqref{transfoBidAskKt} are redundant and can be omitted. 

Note, if there is an asset denoted in a currency different from the domestic currency, one needs to model the exchange rate between domestic and foreign currency as well and obtains a model with one more risky asset.
The generating vectors of solvency cones in higher dimension and without the above assumptions can be obtained analogously, see definition 1.2 in \cite{Schachermayer04}.

Also, if one of the risky assets is a currency, one would rather use a discrete approximation of a mean-reverting process than a geometric Brownian motion for this asset. In this case the model for the process $Y$ with independent components would be similar to above, but the transformation \eqref{transfoKorn} needs to be adapted to the new setting, see for example \cite{HullWhite90}.

Random proportional transaction costs for an asset can be modeled analogously, by treating them as another (correlated) risky asset. In \eqref{transfoBidAsk} one would just replace the constant $\lambda$ by the value of the stochastic process $\lambda$ at node $(t,j_1,...,j_{d-1})$.

\begin{example}\label{exampe_exchange_option}
\label{example correlated tree}(exchange option) Let us consider a European option in which at expiration, the holder can exchange one unit of asset 2 and receive one unit of asset 1.  Let asset $0$ be a riskless bond $B$ with annual interest rate $r$ under frequent compounding and face value $B_T=1$. We assume constant transaction costs $\lambda_0$ for the bond with bid and ask prices as in \eqref{transfoBidAsk}. Assets $1$ and $2$ are two correlated stocks $S^1$ and $S^2$, denoted in the same currency as the bond, with initial stock price for the first stock $S_0^1=45$, volatility $\sigma_1=0.15$, constant transaction costs $\lambda_1$ and for the second stock $S_0^2=50$, $\sigma_2=0.2$, $\lambda_2$ and correlation $\rho=0.2$ between both stocks. The tree is modeled as described in section~\ref{subsec_corr assets}. The bid and ask prices are given as in \eqref{transfoBidAsk}.

Consider an exchange option with physical delivery. The payoff is given by
\[
X\of{\omega} = \of{X_1\of{\omega}, X_2\of{\omega}, X_3\of{\omega}}^T
= \of{0, I_{\cb{S_T^{a,1} \geq S_T^{a,2}}}\of{\omega}, -I_{\cb{S_T^{a,1} \geq S_T^{a,2}}}\of{\omega}}^T.
\]
The maturity is one year. Table~\ref{tabelle exchange option3} gives the vertices of $SHP_0(X)$ in units of (bond, asset $1$, asset $2)^T$ and the scalar superhedging prices $\pi^a_0(X)$ in units of bond and $\pi^a(X)$ in the domestic currency for different values for $r$ and $\lambda=(\lambda_1,\lambda_2,\lambda_3)^T$. The recession cone of $SHP_0(X)$ is equal to $K_0$ generated by the vectors given in \eqref{transfoBidAskKt}.  We used the dual variant of Benson's algorithm in \cite{HamLoeRud13} with different precisions of at most $\epsilon=2\cdot10^{-5}$.
%between $\epsilon=10^{-9}$ and $\epsilon=1.9\cdot10^{-5}$.}
Note that $\pi^a(X)$ in the domestic currency can be calculated straight forward if $\lambda_0=0$ by $\pi^a(X)=\pi^a_0(X)B_0$. If $\lambda_0>0$, the scalar superhedging price $\pi^a(X)$ in the domestic currency can be calculated by adding to $SHP_0(X)$ (as a $3$ dimensional object in a four dimensional space, where the cash axis was added) the four dimensional cone $\widetilde{K}_0$, which is generated by the bid and ask prices of the bond and the stocks through the vectors $((S_t^a)^1,-1,0,...,0)^T$, $(-(S_t^b)^1,1,0,...,0)^T$, $...$, $((S_t^a)^{d-1},0,...,0,-1)^T$, $(-(S_t^b)^{d-1},0,...,0,1)^T)$, and calculating the vertex of the intersection with the cash axis. The reason is that the transaction costs for the bond might lead to the effect that there might be cheaper ways to trade cash into the set $SHP_0(X)$ than to trade the pure cash position $(\pi^a_0(X))^+ B_0^a-(\pi^a_0(X))^- B_0^b$ into the pure bond position $\pi^a_0(X)$, see the fifth example in table~\ref{tabelle exchange option3}, where $\pi^a(X)=6.988<7.011=(\pi^a_0(X))^+ B_0^a-(\pi^a_0(X))^- B_0^b$. %As an alternative: take the min of (cash exchangeable into the vertices of SHP_0(X)), where the min is taken over all vertices.
\begin{table}[pht!]\scriptsize
\begin{tabular}{|c|c|c|}
            \hline
            \multicolumn{3}{|c|}{$r=0\%$, $\lambda=(0\%,2\%,4\%)^T$ }  \\\hline
             n & $4$ & $20$   \\\hline
             vertices of $SHP_0(X)$ & $\left(
                                      \begin{array}{rrrrr}
                                         -7.279 &  -1.936 &   8.263 &   9.979 &  12.359 \\
                                          0.583 &   0.518 &   0.392 &   0.372 &   0.344 \\
                                         -0.264 &  -0.312 &  -0.403 &  -0.419 &  -0.441
                                      \end{array}
                                    \right)$  &$\left(
                                                          \begin{array}{rrrrr}
                                                           -4.166  & -1.616  &  1.817  &  1.960  &  4.341\\
                                                            0.569  &  0.536  &  0.492  &  0.490  &  0.461\\
                                                           -0.287  & -0.309  & -0.338  & -0.339  & -0.360\\
                                                           \end{array}
                                                         \right)$ \\
            & & \\
            $\pi^a_0(X)=\pi^a(X)$   & $6.789$& $8.158$  \\\hline \hline
      \multicolumn{3}{|c|}{$r=5\%$, $\lambda=(0\%,2\%,4\%)^T$} \\\hline
       n & $4$ & $20$ \\\hline
            vertices of $SHP_0(X)$ & $\left(
                                      \begin{array}{rrrrr}
                                         -7.650 &  -2.032 &   8.693 &  10.497 &  12.993 \\
                                          0.583 &   0.518 &   0.392 &   0.372 &   0.343 \\
                                         -0.264 &  -0.312 &  -0.403 &  -0.419 &  -0.441
                                      \end{array}
                                    \right)$ &$\left(
                                                  \begin{array}{rrrrr}
                                                       -4.379&   -1.699&    1.909&    2.060&    4.563\\
                                                        0.569&    0.536&    0.492&    0.490&    0.461\\
                                                       -0.287&   -0.309&   -0.338&   -0.339&   -0.360\\
                                                  \end{array}
                                                \right)$   \\
            & & \\
            $\pi^a_0(X)$  (in bonds)  & $7.134$ &$8.576$ \\
            $\pi^a(X)$  (in cash) & $6.788$ &$8.158$ \\\hline\hline
            \multicolumn{3}{|c|}{ $r=0\%$, $\lambda=(0\%,0.4\%,0.1\%)^T$} \\\hline
             n & $4$ & $20$ \\\hline
            vertices of $SHP_0(X)$ & $\left(
                                      \begin{array}{rrr}
                                        -5.641 &  -3.379 &  11.477 \\
                                         0.501 &   0.475 &   0.310 \\
                                        -0.259 &  -0.281 &  -0.430 \\
                                      \end{array}
                                    \right)$ & $\left(
                                                 \begin{array}{rr}
                                                    -3.703 &  3.222 \\
                                                     0.475 &  0.400 \\
                                                    -0.274 & -0.345 \\
                                                  \end{array}
                                                \right)$  \\
&&   \\
            $\pi^a_0(X)=\pi^a(X)$  & $4.032$ & $4.042$ \\\hline
    \multicolumn{3}{|c|}{$r=5\%$, $\lambda=(0\%,0.4\%,0.1\%)^T$} \\\hline
            vertices of $SHP_0(X)$ & $\left(
                                      \begin{array}{rrr}
                                         -5.946 &  -3.528 &  12.081 \\
                                          0.501 &   0.475 &   0.310 \\
                                         -0.258 &  -0.281 &  -0.430
                                      \end{array}
                                    \right)$ & $\left(
                                                  \begin{array}{rr}
                                                    -3.892 & 3.385 \\
                                                   0.475 & 0.400 \\
                                                    -0.275 & -0.345 \\
                                                  \end{array}
                                                \right)$   \\
&  & \\
            $\pi^a_0(X)$ (in bonds)  & $4.240$ & $4.249$ \\
            $\pi^a(X)$ (in cash) & $4.034$ & $4.042$ \\\hline
    \hline
    \multicolumn{3}{|c|}{$r=5\%$, $\lambda=(1\%,2\%,4\%)^T$}  \\\hline
    n & 4 & 10  \\\hline
            vertices of $SHP_0(X)$  &$\left(
                                      \begin{array}{rrr}
                                      -7.760 &  0.000 &  13.341\\
                                       0.584 &  0.498 &   0.347\\
                                      -0.260 & -0.331 &  -0.446
                                      \end{array}
                                    \right)$
                                    &
                                    $\left(
                                    \begin{array}{rrrrr}
                                       -6.379 &  -6.150 &  -6.016 &  -5.564 &  -5.272 \\
                                        0.576 &   0.573 &   0.572 &   0.567 &   0.563 \\
                                       -0.265 &  -0.267 &  -0.268 &  -0.272 &  -0.275
                                    \end{array}\right.$ \\
                               & & \\
                               & &
                               $\begin{array}{rrrrrr}
                                         -4.852 &  -3.923 &  -2.705 &   0.000 &  4.382 \\
                                          0.559 &   0.549 &   0.536 &   0.506 &  0.457 \\
                                         -0.279 &  -0.288 &  -0.299 &  -0.324 & -0.36
                               \end{array}$\\
                               & & \\
                               & &
                               $\left.
                                    \begin{array}{rrrrr}
                                        4.432 &   5.649 &   6.659  &  7.259 &   7.481 \\
                                        0.456 &   0.443 &   0.431  &  0.424 &   0.422 \\
                                       -0.363 &  -0.373 &  -0.382  & -0.387 &  -0.389
                                    \end{array}
                                    \right)$ \\
                                    & &  \\
            $\pi^a_0(X)$ (in bonds) & $7.418$ & $8.167$ \\
            $\pi^a(X)$ (in cash)    & $6.988$ & $7.692$ \\\hline\hline
    \multicolumn{3}{|c|}{$r=5\%$, $\lambda=(0.2\%,0.4\%,0.1\%)^T$}\\\hline
    n & 4 & 10  \\\hline
            vertices of $SHP_0(X)$  & $\left(
                                       \begin{array}{rrrrr}
                                          -6.236 &  -4.237 &   0.000 &   8.230 &  12.403 \\
                                           0.507 &   0.486 &   0.441 &   0.353 &   0.308 \\
                                          -0.257 &  -0.276 &  -0.317 &  -0.394 &  -0.433
                                       \end{array}
                                     \right)$
                                     &
                                    $\left(
                                    \begin{array}{rrrrr}
                                      -6.518 &  -5.820 &  -4.454 &  -2.652 &  -0.952 \\
                                       0.507 &   0.499 &   0.485 &   0.466 &   0.447 \\
                                      -0.251 &  -0.258 &  -0.271 &  -0.288 &  -0.304
                                    \end{array}\right.$ \\
                                     & & \\
                                     & &
                                     $\begin{array}{rrrrr}
 								          0.000 &   1.131 &   2.172 &   3.047  &  4.151 \\
  										  0.437 &   0.425 &   0.414 &   0.404  &  0.393 \\
 									     -0.313 &  -0.324 &  -0.333 &  -0.342  & -0.352
                                     \end{array}$\\
                                     & & \\
                                     & &
                                     $\left.
                                     \begin{array}{rrrr}
                                         5.022 &   6.501 &    7.035 &    7.235 \\
                                         0.383 &   0.367 &    0.361 &    0.359 \\
                                        -0.360 &  -0.374 &   -0.379 &   -0.381
                                    \end{array}
                                    \right)$ \\
                                    & &  \\
            $\pi^a_0(X)$ (in bonds)   & $4.310$ & 4.318 \\
            $\pi^a(X)$ (in cash)      & $4.109$ & 4.116  \\\hline
          \end{tabular}
          \caption{The set of superhedging portfolios of an exchange option with and without transaction costs for the bond (see example~\ref{exampe_exchange_option})}
  \label{tabelle exchange option3}
\end{table}
\end{example}

\begin{example}
\label{example basket option}(outperformance option: superhedging portfolios and strategies) Let us consider an outperformance option. Let asset $0$ be a riskless cash account with zero interest rates. Assets $1$ and $2$ are two correlated stocks $S^1$ and $S^2$, denoted in the same currency as the cash account, with initial stock price for the first stock $S_0^1=50$, volatility $\sigma_1=0.15$, constant transaction costs $\lambda_1=0.2$ and for the second stock $S_0^2=45$, $\sigma_2=0.2$, $\lambda_2=0.1$ and correlation $\rho=0.2$ between both stocks. The tree is modeled as described in section~\ref{subsec_corr assets}. The bid and ask are given as in \eqref{transfoBidAsk}.
The payoff under physical delivery is given by
\begin{align*}
X\of{\omega} &= \of{X_1\of{\omega}, X_2\of{\omega}, X_3\of{\omega}}^T
\\
&= \of{-K I_{\cb{\max{\{S_T^{a,1},S_T^{a,2}\}}\geq K}}\of{\omega}, I_{\cb{S_T^{a,1} \geq S_T^{a,2} \mbox{ and } S_T^{a,1}\geq K}}\of{\omega}, I_{\cb{S_T^{a,2} > S_T^{a,1}\mbox{ and }S_T^{a,2}\geq K}}\of{\omega}}^T.
\end{align*}
Let the maturity be one year and the strike $K=47$. We will use only a small number of time intervals $n=4$ to illustrate different possibilities of choosing optimal superhedging strategies as described in section~\ref{subsec_strategy}. The set of superhedging portfolios $SHP_0(X)$ has two vertices
\[
    \left(
      \begin{array}{r}
        -27.404 \\
         0.514 \\
        0.388 \\
      \end{array}
    \right), \quad \left(
                     \begin{array}{r}
                       -34.254 \\
                       0.567 \\
                       0.480\\
                     \end{array}
                   \right)
\]
and a recession cone equal to the solvency cone $K_0$,  where we used for computations the dual variant of Benson's algorithm in \cite{HamLoeRud13} with a precision of $\epsilon=10^{-8}$. The scalar superhedging price is $\pi^a(X)=22.624$ in the domestic currency and the scalar subhedging price is  $\pi^b(X)=-8.633$. Negative bid prices might occur when physical delivery is considered in markets with transaction costs. This issue was discussed and resolved in \cite{PerrakisLefoll97}, see also remark~3.30 in \cite{Roux08}.

Let us compute superhedging strategies for a given path using the method in section~\ref{subsec_strategy}. We fix a path given by the sequence of indices $j_1=1,2,3,3,4$ and $j_2=1,1,2,3,4$ at times $t=0,1,...,4$ and an initial portfolio vector given by the first vertex $x_0=(-27.404,0.514,0.388)^T$. No trading is necessary and no withdrawal is possible at times $t=0,3,4$. At time $t=1$, trading is necessary (buy $0.167$ of $S^1$ at price $64.491\$$) and no withdrawal is possible. 
At $t=2$, $2.882\$$ can be withdrawn while still guaranteeing superhedging by buying $0.320$ of $S^1$ at price $69.319\$$ and selling  $0.388$ of $S^2$ at price $41.733\$$. This replicates the claim for this path after the total withdrawal of  $2.882\$$ at $t=2$.
\end{example}

\section{Scalar superhedging price}
\label{sec_Scalarization}
The set of superhedging portfolios plays an important role when one is actually interested in carrying out a strategy starting from an initial portfolio vector that can contain more assets than just cash. Clearly, in this case, it is not enough to know the scalar superhedging price.

On the other hand, if one is interested solely in price bounds for a claim in a certain currency, the scalar superhedging and subhedging prices $\pi^a(X)$ and $\pi^b(X)$ give exactly this information. From no arbitrage it follows that the market bid and ask prices $p^b(X)$ and $p^a(X)$ of a claim $X$ have to satisfy
\begin{align}
\label{NA bounds}
    p^b(X)\leq \pi^a(X),\quad  p^a(X)\geq \pi^b(X)\quad  \mbox{ and }\quad  p^b(X)\leq p^a(X),
\end{align}
where $p^b(X),p^a(X),\pi^b(X),\pi^a(X)$ are all denoted in the same currency. Furthermore, the inequalities
\[
    p^b(X)\geq \pi^b(X),\quad\mbox{ and } \quad p^a(X)\leq \pi^a(X)\quad
\]
are reasonable: Obviously, one would rather superreplicate $X$ with $\pi^a(X)$ than to  buy it at a higher price if $p^a(X)> \pi^a(X)$, but this last inequality would not create arbitrage as long as the inequalities in \eqref{NA bounds} are satisfied. In total, one obtains price bounds
\[
\pi^b(X)\leq p^b(X)\leq p^a(X)\leq \pi^a(X).
\]
In the following, we will discuss how $\pi^b(X)$ and $\pi^a(X)$ can be calculated and how the obtained algorithm is related to the algorithm studied in theorems~\ref{theorem recursive SHP} and \ref{algo}.  In section~\ref{subsec_scalarization of dual}, theorem~\ref{theorem JK}, a dual representation of the scalar superhedging price is given. The result is the $d$-dimensional version of Jouini, Kallal \cite{JouiniKallal95} and can be obtained by scalarizing the dual representation of the set of superhedging portfolios. The dual representation of the scalar superhedging price allows to deduce dynamic programming equations (corollary~\ref{lemma dyn progr scalar shp}) that allow to implement and efficiently calculate the scalar superhedging price of a claim (algorithm in corollary~\ref{lemma alg scalar shp} in section~\ref{section scalar algorithm}).
In section~\ref{subsec_scalar problem as LVOP} the relation between the scalar algorithm and the algorithm of theorem~\ref{theorem recursive SHP} via geometric duality is discussed. One obtains that the calculation of $\pi^a(X)$ reveals also the set $SHP_0(X)$ if a certain mapping is applied to a function appearing in the penultimate step of the scalar algorithm (lemma~\ref{lemma scalar vs set algo}). 

\subsection{Scalarization of the dual representation}
\label{subsec_scalarization of dual}
In the previous sections, the set $SHP_0(X)$ of all initial portfolio vectors that allow to superhedge a claim $X\in L^0_d(\mathcal{F}_T,\R^d)$ was studied. If one is interested in the calculation of price bounds, it is helpful to study the smallest superhedging prices (and the largest subhedging prices) in the currencies or num\'{e}raire of interest. To do so, consider the one dimensional subspace $M$ (see also \eqref{SHPM}) given by $M=\{se^i, s\in\R\}$, where asset number $i$ is the chosen num\'{e}raire. One might repeat this procedure for different currencies/num\'{e}raires if one is interested in price bounds of the portfolio $X$ in different currencies/num\'{e}raires. We focus on those superhedging elements that lie in $M$, i.e. we consider $SHP_0^M(X)$. From \eqref{primalRM} and theorem~\ref{ThmSH}, we obtain
\begin{align}
    SHP_0^M(X)&= \cb{x_0\in M \colon X \in  x_0+A_T}
    \\
    \label{dualRM M}
    &= \bigcap_{\cb{\of{Q, w} \in \mathcal{W}^1
    }
    }\of{E^Q\sqb{X} + G\of{w}}\cap M
    \\
    \label{dualRMSchachi M}
    &= \cb{ x_0\in M \colon \forall Z\in\mathbb{Z}: 
    \; E[X^TZ_T]\leq x_0^TZ_0}.
\end{align}
To obtain the smallest superhedging price, we apply the
scalarization procedure introduced in \cite{HamelHeyde10} to the function $R(X)=SHP_0^M(X)$. That is, we consider the extended real-valued function $\vp_{R, v} \colon L^0_d(\mathcal{F}_T,\R^d)
\to \R\cup\cb{\pm\infty}$ given by
\[
\vp_{R, v}\of{X} = \inf_{u \in R\of{X}} v^Tu
\]
for $v \in (K_0^M)^+$ (see also section~5.1 in \cite{HamelHeydeRudloff10}). $(K_0^M)^+$ denotes the positive dual cone of the cone $K_0^M=K_0\cap M$ in $M$. Thus,
\[
(K_0^M)^+ = \cb{v \in M \colon \forall u \in K_0^M \colon v^Tu \geq 0} \subseteq M.
\]
We will apply the scalarization to the dual representation of $SHP_0^M$ given in \eqref{dualRM M}, respectively \eqref{dualRMSchachi M}. Dynamic programming equations can be obtained. This leads to an algorithm that goes backwards in the event tree.
Recall that in our case $M=\{se^i, s\in\R\}$, where asset number $i\in\{1,...,d\}$ is the asset of interest, e.g. the USD cash account, or a bond. 
Thus, we are scalarizing with respect to $v=e^i\in (K_0^M)^+=\{se^i, s\in\R_+\}$. The calculation of the smallest superhedging price in
the asset of interest, that is the calculation of $\pi^a_i(X)$ leads to a generalization of the well known Jouini, Kallal \cite{JouiniKallal95} representation to the $d$ asset case, see also \cite{Stettner00}. For simplicity we assume that the solvency cone $K_t$ contains no lines. Then, the solvency cone $K_t(\omega)$ is spanned by the vectors $\pi^{ij}e^i-e^j$, $1\leq i,j\leq d$, see section~\ref{sec_Preliminaries}. The general case can be derived using the liquidation map in \eqref{liquidation map}.

\begin{theorem}
\label{theorem JK}
Under the no arbitrage condition (NA), the scalar superhedging price $\pi^a_i(X)$ in units of asset $i\in\{1,...,d\}$ is given by
\begin{align}
    \label{JK d assets}
   \pi^a_i(X)= \sup_{(S_t, Q)\in\mathcal Q^i} E^Q[X^T S_T],
\end{align}
where $\mathcal Q^i$ is the set of all processes $(S_t)_{t=0}^T$ with $S_t^i\equiv 1$, $S_t^k\leq \pi^{jk}S_t^j$ for all $1\leq j,k\leq d$ and their equivalent martingale measures $Q$. 
\end{theorem}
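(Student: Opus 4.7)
The plan is to derive \eqref{JK d assets} by scalarizing the dual representation of $SHP_0(X)$ in \eqref{dualRMSchachi} of Theorem~\ref{ThmSH} with the functional $v=e^i$ and then performing a numeraire change. Since the recession cone of $SHP_0(X)$ contains $\R^d_+$, in particular $e^i$, the scalar superhedging price satisfies
\[
 \pi^a_i(X) = \inf\cb{\pi \in \R \st \pi\, e^i \in SHP_0(X)} = \varphi_{R,e^i}(X),
\]
and using \eqref{dualRMSchachi} this equals
\[
 \sup\cb{\frac{E[X^T Z_T]}{Z_0^i} \st Z \in \mathbb{Z},\; E[(X^T Z_T)^-]<\infty}.
\]

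The key step is to pass from consistent price systems $Z$ to pairs $(S,Q)\in \mathcal{Q}^i$ by a numeraire change. Assumption \eqref{assumption K_t} implies $e^i \in \Int K_t(\omega)$, so for any $Z\in\mathbb{Z}$ we have $Z_t^i>0$ $P$-a.s. I would define
\[
 S_t:=\frac{Z_t}{Z_t^i}, \qquad \frac{dQ}{dP} := \frac{Z_T^i}{Z_0^i}.
\]
Then $S_t^i\equiv 1$ and the inequalities $\pi^{jk}Z_t^j \geq Z_t^k$ (which follow from $\pi^{jk} e^j - e^k \in K_t$ and $Z_t\in K_t^+$) divide through to give $\pi^{jk} S_t^j\geq S_t^k$. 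Since $Z^i$ is a nonnegative $P$-martingale with $E[Z_T^i]=Z_0^i$, $Q$ is a probability measure equivalent to $P$ (using the finiteness / strict consistency from (NA$^\text{r}$)); a direct Bayes computation using the $P$-martingale property of $Z$ shows that $S$ is a $Q$-martingale. The identity
\[
 E[X^T Z_T] = Z_0^i\, E^Q[X^T S_T]
\]
follows from the definition of $Q$, and the integrability condition $E[(X^T Z_T)^-]<\infty$ translates into $E^Q[(X^T S_T)^-]<\infty$.

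Conversely, given $(S,Q)\in \mathcal{Q}^i$, I would recover a consistent price system by putting
\[
 Z_t := S_t\, E\!\left[\frac{dQ}{dP}\,\Big|\,\mathcal{F}_t\right].
\]
The martingale property of $Z$ under $P$ follows from the $Q$-martingale property of $S$ via Bayes, while the inequalities $\pi^{jk}S_t^j\geq S_t^k$ imply $Z_t\in K_t^+$ (and strict positivity, if one wants a strictly consistent system, corresponds to $\mathcal Q^i$ being taken with equivalent martingale measures, as stated). This establishes a bijection between $\mathbb{Z}$ (modulo normalization of $Z_0^i$, which is irrelevant in the quotient $E[X^T Z_T]/Z_0^i$) and $\mathcal{Q}^i$, and the two suprema coincide. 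Taking the supremum over $\mathcal{Q}^i$ of $E^Q[X^T S_T]$ yields \eqref{JK d assets}.

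The main subtlety is the transfer of the integrability condition and the care needed in invoking (NA$^\text{r}$) to guarantee existence of a strictly consistent price system (so that $Z_t^i>0$ $P$-a.s.\ and $Q\sim P$); this justifies treating $1/Z_t^i$ as a well-defined density and ensures $\mathcal{Q}^i\neq \emptyset$ so that the supremum on the right-hand side of \eqref{JK d assets} is actually attained on a non-trivial set. Everything else is essentially bookkeeping around the numeraire change.
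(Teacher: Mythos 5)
Your proposal is correct and follows essentially the same route as the paper: scalarize the dual representation \eqref{dualRMSchachi} with $v=e^i$, then pass to $S_t=Z_t/Z_t^i$ and $dQ/dP=Z_T^i/Z_0^i$, using assumption \eqref{assumption K_t} to guarantee $Z_t^i>0$ and the spanning of $K_t$ by the vectors $\pi^{jk}e^j-e^k$ to translate $Z_t\in K_t^+$ into the bid--ask constraints on $S_t$. The only difference is that you spell out the converse construction $Z_t=S_tE[dQ/dP\,|\,\mathcal F_t]$ and the Bayes computation explicitly, which the paper delegates to the known one-to-one correspondence between consistent price systems and martingale-measure pairs; note also that strict consistency is not needed for $Z_t^i>0$, since \eqref{assumption K_t} already forces $v^Te^i>0$ for every $v\in K_t^+\setminus\{0\}$.
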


\begin{proof}
From the scalarization of \eqref{dualRMSchachi M} with respect to $v=e^i\in (K_0^M)^+$ one obtains
\begin{align}   \label{SHdual}
   \pi^a_i(X)=\inf_{u \in SHP_0^M\of{X}} v^Tu
=\min\{t\in \R \colon  \sup_{Z\in\mathbb{Z}}
   E[X^T\frac{Z_T}{Z_0^i}]\leq t\}.
\end{align}
Note that for every $Z\in\mathbb{Z}$, one can define the corresponding frictionless price of the $d$ assets expressed in asset $i$ as
\begin{equation}\label{eq_one}
    S_t=(\frac{Z_t^1}{Z_t^i}, \frac{Z_t^2}{Z_t^i},...,\frac{Z_t^d}{Z_t^i}),
\end{equation}
i.e., $S_t^i\equiv 1$,and obtains an equivalent martingale measure $Q$ of the process $(S_t)_{t=0}^T$ via
$ \frac{dQ}{dP}=\frac{Z_T^i}{Z_0^i}$.
Note that $Z_t^i>0$, $t\in\{0,...,T\}$ is ensured by assumption~\eqref{assumption K_t}. The set $\mathbb{Z}$ of all consistent price systems $Z$ is one-to-one (up to a multiplicative factor for $Z$) to the set of processes $(S_t)_{t=0}^T$ with $S_t^i\equiv 1$, $S_t^k\leq \pi^{jk}S_t^j$ for all $1\leq j,k\leq d$ and their equivalent martingale measures $Q$. This follows from the observation that $S_t$ is defined by $Z\in\mathbb{Z}$ with $Z\in K^+_t\bs\cb{0}$ via \eqref{eq_one} and $K_t$ is spanned by the vectors $\pi^{ij}e^i-e^j$, $1\leq i,j\leq d$. Furthermore, $Z\in\mathbb{Z}$ is a martingale
under $P$, which corresponds to $Q$ being a martingale measure for $(S_t)_{t=0}^T$, see \cite{Schachermayer04} p.24/25 for details. Since for every $Z\in\mathbb{Z}$ 
\[
E[X^T\frac{Z_T}{Z_0^i}]=E[X^T\frac{Z_T}{Z_T^i}\frac{Z_T^i}{Z_0^i}]=E^Q[X^T S_T],
\]
we can rewrite \eqref{SHdual} as in \eqref{JK d assets}.
\end{proof}

The supremum in \eqref{JK d assets} is attained if we replace $\mathcal Q^i$ by the enlarged set $\overline{\mathcal Q}^i$, that contains martingale measures that are not necessarily equivalent to $P$. We will write dynamic programming equations for problem \eqref{JK d assets} that will allow to efficiently calculate the scalar superhedging price of a claim $X\in L^0_d(\mathcal{F}_T,\R^d)$.
To do so, let us define the the sets of one-step transition densities as in \cite{CheriditoKupper10}
\[
    \mathcal D_t :=\{\xi\in L^1(\mathcal F_t,\R_+) : E_{t-1}[\xi] = 1\},\quad t = 1, . . . , T.
\]
Let $\xi_0=1$. Every sequence $((S_t,\xi_t))_{t=0}^T$ with $S_t^i\equiv 1$, $S_t^k\leq \pi^{jk}S_t^j$ for all $1\leq j,k\leq d$ for $t = 0, . . . , T$ and  $\xi_t\in \mathcal D_t$, $E_{t}[\xi_{t}S_{t}]=S_{t-1}$ for $t = 1, . . . , T$ defines an element $(S_t, Q)$ in  $\overline{\mathcal Q}^i$ by setting
\[
    \frac{dQ}{dP}=\xi_1\cdot...\cdot\xi_T.
\]
On the other hand, every element $(S_t, Q)\in\overline{\mathcal Q}^i$ induces a sequence with the above properties by setting
\[
    \xi^Q_t:=\left\{\begin{array}{cc}
               \frac{E_{t}[\frac{dQ}{dP}]}{E_{t-1}[\frac{dQ}{dP}]} &\mbox{on}\;\{E_{t-1}[\frac{dQ}{dP}]>0\}
               \\
               1 & \mbox{on}\;\{E_{t-1}[\frac{dQ}{dP}]=0\}.
             \end{array}\right.
\]
Let us denote for $t = 1, . . . , T$
\[
\mathcal A_t^i(S_{t-1})=\left\{(S_t,\xi_t):S_t^i=1, S_t^k\leq \pi^{jk}S_t^j, 1\leq j,k\leq d,\;\xi_t\in \mathcal D_t,\; E_{t}[\xi_{t}S_{t}]=S_{t-1}\right\}.
\]
From the considerations above, we obtain the following.

\begin{corollary}
\label{lemma dyn progr scalar shp}
Assume the no arbitrage condition (NA). The scalar superhedging price $\pi^a_i(X)$ given in \eqref{JK d assets} can be written as a sequence of nested optimization problems.
Let the value function at time $T-1$ be
\begin{align}
\label{dp initial step}
    V_{T-1}(S_{T-1})=\max_{(S_T,\xi_T)\in\mathcal A_T^i(S_{T-1})}E_{T-1}[\xi_T X^T S_T].
\end{align}
For $t\in\{T-2,...,0\}$ we define the value function
\begin{align}
\label{dp intermediate step}
    V_{t}(S_{t})=\max_{(S_{t+1},\xi_{t+1})\in\mathcal A_{t+1}^i(S_{t})}E_{t}[\xi_{t+1} V_{t+1}(S_{t+1})].
\end{align}
Then
\begin{align}
\label{dp terminal step}
    \pi^a_i(X)=\max_{S_0\in\R^d: S_0^i=1, S_0^k\leq \pi^{jk}S_0^j, 1\leq j,k\leq d}V_0(S_0).
\end{align}
\end{corollary}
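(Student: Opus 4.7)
The plan is to derive the dynamic programming representation directly from \eqref{JK d assets} by reparametrising the dual variables through one-step densities and then applying the tower property of conditional expectation step by step.

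First, I would argue that the supremum in \eqref{JK d assets} over $\mathcal Q^i$ can be replaced by a supremum over the enlarged set $\overline{\mathcal Q}^i$ of (not necessarily equivalent) martingale measures and frictionless price systems, without changing the value: the original set is dense in the enlarged one, and the objective $(S,Q)\mapsto E^Q[X^TS_T]$ is continuous on $\overline{\mathcal Q}^i$. Using the bijection discussed immediately before the corollary between elements of $\overline{\mathcal Q}^i$ and sequences $((S_t,\xi_t))_{t=0}^T$ with $\xi_0=1$, $(S_t,\xi_t)\in\mathcal A_t^i(S_{t-1})$ for $t=1,\dots,T$, and with $S_0$ satisfying $S_0^i=1$, $S_0^k\le \pi^{jk}S_0^j$, we can rewrite
\begin{align*}
\pi^a_i(X)\;=\;\sup_{S_0,((S_t,\xi_t))_{t=1}^T}\, E\bigl[\xi_1\xi_2\cdots\xi_T\, X^T S_T\bigr],
\end{align*}
the supremum being taken over all admissible sequences.

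Next I would peel off the expectations from the inside out by the tower property, writing
\begin{align*}
E\bigl[\xi_1\xi_2\cdots\xi_T X^T S_T\bigr]
= E\!\left[\xi_1\, E_1\!\bigl[\xi_2\, E_2[\cdots E_{T-1}[\xi_T X^T S_T]\cdots]\bigr]\right].
\end{align*}
The crucial structural point is that the admissibility constraint $(S_t,\xi_t)\in\mathcal A_t^i(S_{t-1})$ is purely local: the pair $(S_t,\xi_t)$ is $\mathcal F_t$-measurable and the requirement $E_{t-1}[\xi_t S_t]=S_{t-1}$ together with $S_t^i=1$ and the consistency inequalities constrains $(S_t,\xi_t)$ only through $S_{t-1}$. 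Consequently, when the innermost $\mathcal F_{T-1}$-measurable conditional expectation is considered for fixed $S_{T-1}$, the supremum over $(S_T,\xi_T)$ can be pulled inside and becomes exactly the definition \eqref{dp initial step} of $V_{T-1}(S_{T-1})$. Iterating this argument backward yields \eqref{dp intermediate step}, and maximising finally over the admissible $S_0$ gives \eqref{dp terminal step}.

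The main obstacle is the interchange of supremum and conditional expectation at each step. On the finite probability space considered here this reduces to a node-by-node optimisation: for every atom $\omega\in\Omega_t$ one selects $(S_{t+1}(\bar\omega),\xi_{t+1}(\bar\omega))_{\bar\omega\in\suc(\omega)}$ independently, subject only to the local equality and positivity constraints that define $\mathcal A_{t+1}^i(S_t(\omega))$. Measurable selection is automatic because $\Omega$ is finite, so an optimiser realising $V_t(S_t(\omega))$ exists at each node, and the pasted sequence $((S_t,\xi_t))_{t=0}^T$ is admissible and attains the nested value. Hence the nested value equals the joint supremum, which establishes the corollary.
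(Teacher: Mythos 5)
Your argument is correct and follows essentially the same route as the paper: the paper's proof is a one-line appeal to the one-to-one correspondence between $\overline{\mathcal Q}^i$ and the admissible sequences $((S_t,\xi_t))_{t=0}^T$ together with the tower property, which is exactly the reparametrisation and backward peeling you carry out. Your additional justification of the sup/conditional-expectation interchange via node-by-node optimisation on the finite tree simply makes explicit what the paper leaves implicit.
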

\begin{proof}
This follows from the one-to-one correspondence between the set $\overline{\mathcal Q}^i$ and
$\{((S_t,\xi_t))_{t=0}^T: S_0^i=1, S_0^k\leq \pi^{jk}S_0^j, 1\leq j,k\leq d,(S_{t},\xi_{t})\in\mathcal A_t^i(S_{t-1}), t = 1, . . . , T \}$ and the tower property.
\end{proof}

\subsection{Algorithm for the scalar superhedging price}
\label{section scalar algorithm}
We will split each of the iteration steps in \eqref{dp intermediate step} into two substeps.
In the first steps, one incorporates only the constraint $S_{t+1}^i=1$, $S_{t+1}^k\leq \pi^{jk}S_{t+1}^j$ for $1\leq j,k\leq d$, the
second steps coincides with actually solving \eqref{dp intermediate step}.
For $d=2$, this algorithm coincides with algorithm~4.1 in \cite{RouxZastawniak08}, and algorithm~3.15
in \cite{Roux08}. 

\begin{corollary}{(Algorithm scalar superhedging price under assumption~(NA))}
\label{lemma alg scalar shp}
\begin{enumerate}
\item \begin{align}
\label{alg cut T}
           V_{T}(S_{T})=\widetilde{V}_{T}(S_{T}) =
           \,\left\{\begin{array}{r@{\quad:\quad}l}
           X^T S_T &  S_T^i=1, S_T^k\leq \pi^{jk}S_T^j, 1\leq j,k\leq d
           \\[0.3cm]
           -\infty &  \mbox{else.}
           \end{array}\right.
       \end{align}

  This defines a function   $V_{T}^{\omega}:\R^d\rightarrow\R\cup\{-\infty\}$ at each node $\omega\in \Omega_T$.

\item For $t\in\{T-1,...,0\}$ and nodes $\omega\in \Omega_t$

\begin{align}
\label{alg cap}
    V_{t}^{\omega}(S_{t})=\capf\{\widetilde{V}_{t+1}^{\bar\omega}(S_{t+1}): \bar\omega\in\suc(\omega)\}.
\end{align}
    \begin{align}
    \label{alg cut t}
           \widetilde{V}_{t}(S_{t}) =
           \,\left\{\begin{array}{r@{\quad:\quad}l}
           V_{t}(S_{t}) &  S_t^i=1, S_t^k\leq \pi^{jk}S_t^j, 1\leq j,k\leq d
           \\[0.3cm]
           -\infty &  \mbox{else.}
           \end{array}\right.
       \end{align}

\item The scalar superhedging price of $X\in L^0_d(\mathcal{F}_T,\R^d)$ is given by
          \[
            \pi^a_i(X)=\max_{S_0\in\R^d}\widetilde{V}_0(S_0).
           \] \end{enumerate}
\end{corollary}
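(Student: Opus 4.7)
The plan is to prove correctness by showing, via backward induction on $t$, that the algorithm's quantities $V_t^\omega$ and $\widetilde V_t^\omega$ coincide node-by-node with the value functions of Corollary~\ref{lemma dyn progr scalar shp}; part 3 of the corollary then follows immediately from \eqref{dp terminal step}. The bridge between the dynamic-programming formulation and the concave-cap formulation is a single change of variables at each node that reinterprets a one-step density $\xi_{t+1}\in\mathcal D_{t+1}$ as a probability weighting on the finite successor set $\suc(\omega)$.

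For the base case, \eqref{alg cut T} gives $\widetilde V_T(S_T)=X^T S_T$ on the admissibility set and $-\infty$ elsewhere, which is precisely the integrand appearing in \eqref{dp initial step}. For the induction step, fix $\omega\in\Omega_t$ and set $p_{\bar\omega}:=P(\bar\omega\mid\omega)$ for $\bar\omega\in\suc(\omega)$; the standing assumption $P(\omega)>0$ for every atom implies $p_{\bar\omega}>0$. A density $\xi_{t+1}\in\mathcal D_{t+1}$ at $\omega$ is encoded by nonnegative numbers $\xi_{\bar\omega}$ with $\sum_{\bar\omega}p_{\bar\omega}\xi_{\bar\omega}=1$, so the map $\xi_{\bar\omega}\mapsto\alpha_{\bar\omega}:=p_{\bar\omega}\xi_{\bar\omega}$ is a bijection between such densities and probability weights on $\suc(\omega)$. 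Writing $s_{\bar\omega}:=S_{t+1}(\bar\omega)$, the conditions defining $\mathcal A_{t+1}^i(S_t)$ translate to $\alpha_{\bar\omega}\ge 0$, $\sum\alpha_{\bar\omega}=1$, $\sum\alpha_{\bar\omega}s_{\bar\omega}=S_t$ together with admissibility of each $s_{\bar\omega}$, and $E_t[\xi_{t+1}f(S_{t+1})](\omega)=\sum_{\bar\omega}\alpha_{\bar\omega}f(s_{\bar\omega})$ for any $f$.

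Applying this substitution to \eqref{dp intermediate step} and using the inductive assumption that $\widetilde V_{t+1}^{\bar\omega}$ extends the DP value $V_{t+1}(\cdot)$ at $\bar\omega$ by $-\infty$ off the admissible set, one obtains
\begin{equation*}
V_t(S_t)(\omega)=\sup\Bigl\{\sum_{\bar\omega\in\suc(\omega)}\alpha_{\bar\omega}\widetilde V_{t+1}^{\bar\omega}(s_{\bar\omega})\;:\;\alpha_{\bar\omega}\ge 0,\;\sum\alpha_{\bar\omega}=1,\;\sum\alpha_{\bar\omega}s_{\bar\omega}=S_t\Bigr\},
\end{equation*}
since the admissibility requirement on $s_{\bar\omega}$ is already absorbed into the $-\infty$-convention. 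The right-hand side is exactly the definition of the concave cap $\capf\{\widetilde V_{t+1}^{\bar\omega}\colon\bar\omega\in\suc(\omega)\}$ at $S_t$, i.e., the algorithm's $V_t^\omega(S_t)$ from \eqref{alg cap}; the cutting step \eqref{alg cut t} then produces $\widetilde V_t^\omega$, which matches the implicit admissible domain of the DP value. This closes the induction, and part 3 of the corollary coincides with \eqref{dp terminal step}. Finiteness of $\Omega$ together with polyhedrality of all constraint sets guarantees that the suprema are attained whenever finite, so $\sup$ can be replaced by $\max$ throughout.

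The main obstacle is bookkeeping rather than substance: one has to verify that the $-\infty$-convention introduced by the cutting step \eqref{alg cut t} precisely reproduces the admissibility clause hidden in $\mathcal A_{t+1}^i(S_t)$, and that the bijection between $\xi_{t+1}$ and the weights $(\alpha_{\bar\omega})$ is well defined at every node (which is where the assumption $P(\omega)>0$ enters). Once this dictionary between densities-with-martingale-constraints and convex-weights-with-barycenter-constraints is set up, the two recursions coincide term by term, and correctness of the algorithm follows by backward induction.
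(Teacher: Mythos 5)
Your argument is correct and follows essentially the same route as the paper: both reduce the substeps \eqref{alg cap} and \eqref{alg cut t} to the dynamic-programming recursion \eqref{dp intermediate step} of corollary~\ref{lemma dyn progr scalar shp} by identifying the concave cap with an optimization over convex weights with a barycenter constraint. The only difference is that the paper outsources this identification to lemma~A.4 of \cite{Roux08}, whereas you derive it explicitly, including the bijection between one-step transition densities and the weights $\alpha_{\bar\omega}$ and the role of $P(\omega)>0$.
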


The {\em concave cap function} $\capf \{f_1,...,f_m\}:\R^d\rightarrow\R\cup\{-\infty\}$ of a finite number of concave functions $f_i:\R^d\rightarrow\R\cup\{-\infty\}$, $i=1,...,m$ is defined by its hypograph via
\begin{align}\label{capset}
     \hypo(\capf \{f_1,...,f_m\})=\clco\of{\bigcup_{i=1}^m \hypo f_i},
\end{align}
compare \cite{Roux08}, p.135. The closure can be omitted if $f_i$, $i=1,...,m$ are polyhedral, i.e., the hypographs of $f_i$ are polyhedral convex sets. Equation~\eqref{capset} already indicates a relationship between the cap function and set-operations.
\begin{proof}{(proof of corollary~\ref{lemma alg scalar shp})}
    It follows from the representation of the cap function as an optimization problem as in
    lemma~A.4 in \cite{Roux08} that the two substeps \eqref{alg cut t} (respectively \eqref{alg cut T} for $t=T$) and  \eqref{alg cap}  coincide with optimization problem \eqref{dp intermediate step}.
\end{proof}

\begin{remark}
   Obviously, the scalar superhedging price at time $t$ and node $\omega\in \Omega_t$ is given by
   \[
  (\pi^a_i)_t(X)(\omega)=\max_{S_t\in\R^d}\widetilde{V}_t^{\omega}(S_t)
  =\max_{S_t\in\R^d: S_t^i=1, S_t^k\leq \pi^{jk}_t(\omega)S_t^j, 1\leq j,k\leq d}V_t^{\omega}(S_t).
   \]
\end{remark}

%%%%%%%%%%%%%%%%%%%%%%%%%%%%%%%%%%

\subsection{Interpretation in terms of vector optimization and recovering the set of superhedging portfolios}
\label{subsec_scalar problem as LVOP}

We show in this section that the algorithm for the scalar superhedging price is closely related to the algorithm which computes the set of superhedging portfolios. This means that $SHP_0(X)$ is also obtained by the scalar algorithm and, depending on how the scalar algorithm is realized, it practically coincides with a special case of the algorithm of theorem~\ref{theorem recursive SHP}. This relationship is established using duality of the linear vector optimization reformulation. 

\begin{lemma}
 \label{lemma scalar vs set algo}
Let the assumptions of theorem~\ref{theorem recursive SHP} be satisfied and let $\widetilde V^{ \omega}_t(S_t)$ be as in corollary~\ref{lemma alg scalar shp}. For $t=0,...,T$ and $\omega\in\Omega_t$,
\begin{equation}\label{baumfaellung}
 SHP_t(X)( \omega) = \big\{x \in \R^d \colon  (S_t)^T x \geq \widetilde V^{ \omega}_t(S_t), \;
 (S_t,\widetilde V^{ \omega}_t(S_t)) \text{ vertices of } \hypo \widetilde V_t^{ \omega}\big\},
\end{equation}
\end{lemma}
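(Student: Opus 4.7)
The plan is to identify, for every $t$ and $\omega$, the function $\widetilde V_t^\omega$ produced by the scalar algorithm with the (concave, positively homogeneous) inf-support function
\[
g_t^\omega(S) := \inf_{x \in SHP_t(X)(\omega)} S^T x,
\]
restricted to the slice $\{S : S^i = 1\} \cap K_t^+(\omega)$ and set to $-\infty$ outside. Once this identification is in hand, \eqref{baumfaellung} follows from two standard observations. First, $SHP_t(X)(\omega)$ coincides with the intersection of all halfspaces $\{x : S^T x \geq g_t^\omega(S)\}$ indexed by $(S,g_t^\omega(S)) \in \hypo g_t^\omega$, and among these only those corresponding to vertices of the polyhedron $\hypo \widetilde V_t^\omega$ are non-redundant. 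Second, \eqref{eq_rec} gives $e^i \in \Int (SHP_t(X)(\omega))_\infty$, so every facet normal of $SHP_t(X)(\omega)$ has strictly positive $i$-th coordinate and is thus captured after normalizing by $S^i = 1$; this correspondence between vertices of $\hypo \widetilde V_t^\omega$ and facets of $SHP_t(X)(\omega)$ is the $q = d$, $c = e^i$ specialization of the geometric duality map $\Psi$ of Theorem~\ref{th_mr}.

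The identification $\widetilde V_t^\omega = g_t^\omega$ I would prove by backward induction on $t$. The base case $t = T$ is direct: from $SHP_T(X)(\omega) = X(\omega) + K_T(\omega)$ one computes $g_T^\omega(S) = S^T X(\omega) + \inf_{k \in K_T(\omega)} S^T k$, which equals $S^T X(\omega)$ on $K_T^+(\omega)$ and is $-\infty$ off it, matching \eqref{alg cut T} on the normalized slice. For the inductive step I would decompose the recursion \eqref{eqrecursive} into its two geometric ingredients and mirror them by the two sub-steps of the algorithm. The Minkowski sum with $K_t(\omega)$ is dualized by the cut-off \eqref{alg cut t}, because $g_{A + K_t(\omega)}(S) = g_A(S)$ for $S \in K_t^+(\omega)$ and $= -\infty$ otherwise.

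The remaining and main step is to show that the cap operation \eqref{alg cap} realizes intersection at the level of inf-support functions, i.e.\ that for $A_{\bar\omega} := SHP_{t+1}(X)(\bar\omega)$ one has
\[
\capf \{g_{A_{\bar\omega}}\}(S) \;=\; g_{\bigcap_{\bar\omega} A_{\bar\omega}}(S).
\]
The inclusion $\hypo g_{A_{\bar\omega}} \subseteq \hypo g_{\bigcap A_{\bar\omega}}$ is immediate from $\bigcap A_{\bar\omega} \subseteq A_{\bar\omega}$, and the reverse inclusion is the core obstacle. I would prove it by LP duality: using inequality representations $A_{\bar\omega} = \{x : B_{\bar\omega} x \geq b_{\bar\omega}\}$ stored in lines 10--11 of the SHP-Algorithm, the LP $g_{\bigcap A_{\bar\omega}}(S) = \min\{S^T x : B_{\bar\omega} x \geq b_{\bar\omega} \text{ for all } \bar\omega\}$ has dual $\max\{\sum_{\bar\omega} b_{\bar\omega}^T u_{\bar\omega} : u_{\bar\omega} \geq 0,\ \sum_{\bar\omega} B_{\bar\omega}^T u_{\bar\omega} = S\}$. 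Setting $S_{\bar\omega} := B_{\bar\omega}^T u_{\bar\omega}$ and applying the per-node LP duality $g_{A_{\bar\omega}}(S_{\bar\omega}) = \max\{b_{\bar\omega}^T u_{\bar\omega} : u_{\bar\omega} \geq 0,\ B_{\bar\omega}^T u_{\bar\omega} = S_{\bar\omega}\}$ (valid under (NA), which keeps all $A_{\bar\omega}$ non-empty) expresses $g_{\bigcap A_{\bar\omega}}(S)$ as $\sum_{\bar\omega} g_{A_{\bar\omega}}(S_{\bar\omega})$ with $\sum_{\bar\omega} S_{\bar\omega} = S$. Positive homogeneity of each $g_{A_{\bar\omega}}$ then turns this additive decomposition into a convex combination of points of $\bigcup_{\bar\omega} \hypo g_{A_{\bar\omega}}$ with weighted sum $(S, g_{\bigcap A_{\bar\omega}}(S))$, which by definition lies in $\clco \bigcup_{\bar\omega} \hypo g_{A_{\bar\omega}} = \hypo \capf\{g_{A_{\bar\omega}}\}$. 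Chaining this with the inductive hypothesis $\widetilde V_{t+1}^{\bar\omega} = g_{A_{\bar\omega}}$ on the slice and with the cut-off step closes the induction, and together with the first paragraph proves \eqref{baumfaellung}.
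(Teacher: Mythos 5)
Your proposal is correct, but it takes a genuinely different route from the paper. You establish the identity $\widetilde V_t^{\omega}=g_t^{\omega}$ (the inf-support function of $SHP_t(X)(\omega)$ on the normalized slice) by backward induction, proving that the cut-off step dualizes the Minkowski sum with $K_t(\omega)$ and that the $\capf$ operation dualizes intersection via scalar LP duality (the sup-convolution formula for the support function of an intersection); only at the very end do you invoke the vertex--facet correspondence. The paper instead rewrites the recursion of corollary~\ref{lemma alg scalar shp} as a parametric LP, recognizes $\hypo \widetilde V_t^{\omega}$ as the lower image $\D^*$ of the geometric dual \eqref{d} (with $c=e^d$ after moving the num\'eraire to the last coordinate) of the vector optimization problem whose upper image is $SHP_t(X)(\omega)$ by theorem~\ref{theorem recursive SHP}, and then applies theorem~\ref{th_mr} wholesale. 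Your argument is more elementary and self-contained -- it isolates exactly which classical duality underlies each sub-step of the scalar algorithm -- while the paper's argument is designed to exhibit the structural identification of the scalar algorithm with the SHP-Algorithm, which is the point of section~\ref{subsec_scalar problem as LVOP}; in substance the scalar LP duality you use is the same duality packaged inside theorem~\ref{th_mr}, so the proofs are close cousins. One detail you should make explicit: in the sup-convolution step, the decomposition $S=\sum_{\bar\omega}S_{\bar\omega}$ produced by LP duality has $S_{\bar\omega}\in\dom g_{A_{\bar\omega}}=((A_{\bar\omega})_\infty)^+$, and to turn it into a convex combination of points of $\bigcup_{\bar\omega}\hypo\widetilde V_{t+1}^{\bar\omega}$ (rather than of the conical hypographs $\hypo g_{A_{\bar\omega}}$, whose $\capf$ is a different object) you must normalize each nonzero $S_{\bar\omega}$ to the slice $S^i=1$; this is legitimate because \eqref{eq_rec} at level $t+1$ gives $e^i\in\Int(A_{\bar\omega})_\infty$ and hence $S_{\bar\omega}^i>0$ for every nonzero $S_{\bar\omega}$ in the domain, so the degenerate case $S_{\bar\omega}\neq 0$ with weight $S_{\bar\omega}^i=0$ cannot occur. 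With that observation added, your induction closes cleanly.
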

\begin{proof}
We start with a reformulation of the scalar algorithm. As it can be seen from corollary~\ref{lemma dyn progr scalar shp} and \ref{lemma alg scalar shp} using the first three substeps, one obtains for $ \omega \in \Omega_{T-1}$

\begin{align}\label{scalar_alg_1}
  &\widetilde V^{\omega}_{T-1}(S_{T-1})=
     \left\{
       \begin{array}{ll}
          \displaystyle\max_{(\xi^{\bar\omega},\,S^{\bar\omega})} \displaystyle\sum_{\bar\omega\in\suc(\omega)}\xi^{\bar\omega} X(\bar \omega)^T S^{\bar\omega} & \text{ if }
           S_{T-1}^i=1, S_{T-1}^k\leq \pi^{jk}_{T-1}(\omega)S_{T-1}^j, 1\leq j,k\leq d  \\
                             -\infty  & \text{ otherwise }
       \end{array}
     \right.
\end{align}
where the maximum is taken subject to the constraints
\[ (S^{\bar\omega})^i=1 , (S^{\bar\omega})^k\leq \pi^{jk}_T(\bar\omega)(S^{\bar\omega})^j, 1\leq j,k\leq d ,\quad
   \xi^{\bar\omega} \geq 0,\quad
   \sum_{\bar\omega\in\suc( \omega)} \!\!\!\xi^{\bar\omega} = 1,\quad
   \displaystyle\sum_{\bar\omega\in\suc(\omega)}\!\!\! \xi^{\bar\omega} S^{\bar\omega} = S_{T-1}.
\]
For $t=T-2,\dots,0$ and $ \omega \in \Omega_{t}$ one has
\begin{align}\label{scalar_alg_2}
   &\widetilde V^{\omega}_{t}(S_{t})=
   \left\{\begin{array}{ll}
     \displaystyle\max_{(\xi^{\bar\omega},\,S^{\bar\omega})} \displaystyle\sum_{\bar\omega\in\suc( \omega)}\xi^{\bar\omega} \widetilde V^{\bar\omega}_{t+1}(S^{\bar\omega})
        & \text{ if } S_{t}^i=1, S_{t}^k\leq \pi^{jk}_t(\omega)S_{t}^j, 1\leq j,k\leq d  \\
      -\infty
        & \text{ otherwise, }
  \end{array}
  \right.
\end{align}
where the constraints for the maximum are
\[
  \xi^{\bar\omega} \geq 0,\quad
  \displaystyle\sum_{\bar\omega\in\suc(\omega)} \xi^{\bar\omega} = 1,\quad
  \displaystyle\sum_{\bar\omega\in\suc(\omega)} \xi^{\bar\omega} S^{\bar\omega}=S_{t}.
\]
Finally, we have
\begin{align}\label{eq_hmoll1}
    \pi^a_i(X)=\max_{S_0\in \R^d} \widetilde V_0(S_0).
\end{align}
The constraints $S_t^i=1$, $S_t^k\leq \pi^{jk}_t(\omega)S_t^j, 1\leq j,k\leq d$ can be equivalently expressed as $S_t^i=1$, $S_{t} \in K^+_{t}$. The constraints in \eqref{scalar_alg_1} concerning $(\xi^{\bar\omega},\,S^{\bar\omega})$ can be replaced by $\xi^{\bar\omega} S^{\bar\omega}=\widetilde K^+_T(\bar\omega) u^{\bar\omega}$, $u^{\bar\omega} \geq 0$, $\sum_{\bar\omega\in\suc( \omega)} \widetilde K^+_T(\bar\omega) u^{\bar\omega} = S_{T-1}$ using the fact that we set $S_{T-1}^i=1$.
From \eqref{scalar_alg_1} we obtain for $\omega \in \Omega_{T-1}$,
\begin{align}\label{scalar_alg_3a}
 & \widetilde V^{ \omega}_{T-1}(S_{T-1})=
     \left\{
       \begin{array}{ll}
          \displaystyle\max_{u^{\bar\omega}} \displaystyle\sum_{\bar\omega\in\suc(\omega)} X(\bar\omega)^T \widetilde K^+_T({\bar\omega})u^{\bar\omega}& \text{ if }
            S_{T-1} \in K^+_{T-1}(\omega),\quad S^i_{T-1} = 1  \\
                             -\infty  & \text{ otherwise, }
       \end{array}
     \right.
\end{align}
where the maximum is taken subject to
\[
 u^{\bar\omega} \geq 0,\quad  \sum_{\bar\omega\in\suc( \omega)} \widetilde K^+_T(\bar\omega) u^{\bar\omega} = S_{T-1}.
\]
Setting $B^{\bar\omega} = (\widetilde K^+_T({\bar\omega}))^T$, $b^{\bar\omega} =(\widetilde K^+_T({\bar\omega}))^T X(\bar\omega)$, $w=(S^1_{T-1},...,S^{i-1}_{T-1},S^{i+1}_{T-1},...,S^{d}_{T-1},S^{i}_{T-1})$ and
omitting the $i$-th variable $S^i_{T-1}$ of $\widetilde V_{T-1}^{ \omega}$ (which does not change the problem), we see that $\hypo \widetilde V_{T-1}^{ \omega}$ is nothing else than the lower image $\D^*$ of a dual vector optimization problem \eqref{d}, that is $\D^*=D^*[T]-K$, where the parameter vector $c$, which has to be an interior point of the ordering cone, is chosen to be the $d$-th unit vector, i.e. $c=e^d$. Note that the $i$-th component of $S_{T-1}$, corresponds to the $d$-th component of $w$. This means that $c=e^d$ corresponds to the num\'{e}raire-component.

The corresponding primal problem \eqref{p} is to
\begin{equation}\label{eq_primalT}
\text{ minimize } id: \R^d\to \R^d \text{ w.r.t. } \leq_{K_{T-1}(\omega)} \text{ s.t. } B^{\bar \omega} x \geq b^{\bar \omega}, \bar \omega \in \suc(\omega).
\end{equation}
We know by theorem~\ref{theorem recursive SHP} that $SHP_T(X)(\bar \omega)=\{x\in \R^d \colon B^{\bar\omega} x \geq b^{\bar \omega}\}$ and the upper image of problem \eqref{eq_primalT} is $\P=SHP_{T-1}(X)(\omega)$. Geometric duality \cite{HeydeLoehne08,Loehne11,HamLoeRud13} 
yields that an inequality-representation of $\P$ is given by the vertices of $\D^*$, that is,
\begin{align}\label{eq_hmoll2}
 SHP_{T-1}(X)( \omega) = \big\{x \in \R^d \colon & (S_{T-1})^T x \geq \widetilde V^{ \omega}_{T-1}(S_{T-1}), \\ \nonumber
& (S_{T-1},\widetilde V^{ \omega}_{T-1}(S_{T-1})) \text{ vertices of } \hypo \widetilde V_{T-1}^{ \omega}\big\}
\end{align}
which can be expressed by a matrix $B^{ \omega}$ and a vector $b^{ \omega}$, i.e.
\begin{equation} \label{eq_hmoll3}
SHP_{T-1}(X)(\omega) = \cb{x \in \R^d \colon B^{\omega} x \geq b^{\omega}}.
\end{equation}
At time $t \in \cb{T-2,\dots,0}$ and $\omega\in\Omega_t$, $\bar \omega \in \suc(\omega)$, we use elements
$(S_{t+1},\widetilde V_{t+1}^{\bar \omega}(S_{t+1}))$ of the graph of $\widetilde V_{t+1}^{\bar \omega}$, which can be expressed as a convex combination of vertices of $\hypo \widetilde V_{t+1}^{ \bar\omega}$. The coefficients of the convex combinations can be interpreted as variables $u^{ \bar\omega}$ of the (geometric) dual \eqref{d} of the linear vector optimization problem
\begin{equation*}
\text{ minimize } id: \R^d\to \R^d \text{ w.r.t. } \leq_{K_{t}(\omega)} \text{ s.t. } B^{\bar \omega} x \geq b^{\bar \omega}, \bar \omega \in \suc(\omega).
\end{equation*}
Thus \eqref{scalar_alg_2} can be reformulated as
\begin{align}\label{scalar_alg_3b}
   &\widetilde V^{ \omega}_{t}(S_{t})=
     \left\{
       \begin{array}{ll}
          \displaystyle\max_{u^{\bar\omega}} \displaystyle\sum_{\bar\omega\in\suc( \omega)}(b^{\bar\omega})^T u^{\bar\omega} & \text{ if }
            S_{t} \in K^+_{t}( \omega),\quad S^i_{t} = 1  \\
                             -\infty  & \text{ otherwise, }
       \end{array}
     \right.
 \end{align}
where the maximum is taken subject to the constraints
\[  u^{\bar\omega}\geq 0,\quad   \sum_{\bar\omega\in\suc( \omega)} (B^{\bar\omega})^T u^{\bar\omega} = S_{t}.
\]
We see that $\hypo \widetilde V^{\omega}_{t}$ is again the lower image $\D^*$ of the dual vector optimization problem \eqref{d} for $c=e^d$. Likewise to above, using theorem~\ref{theorem recursive SHP} and geometric duality
\cite{HeydeLoehne08, Loehne11, HamLoeRud13},
we obtain \eqref{baumfaellung}, which completes the proof.
\end{proof}

\begin{remark} Theoretically, the hypographs in the preceeding result can be calculated directly using vertex enumaration. Numerical advantages of treating those problems as linear vector optimization problems are discussed at the end of section \ref{sec_reformulationLVOP}. 
\end{remark}

\begin{remark}
Note that \eqref{scalar_alg_1} and \eqref{scalar_alg_2} are parametric linear optimization problems. For every choice of the parameter $S_t$, a linear program has to be solved. It is well known that parametric linear problems of the present type correspond to linear vector optimization problems, see e.g. \cite{Focke73}.
\end{remark}

The above considerations show that the scalar algorithm is closely related to the algorithm of theorems~\ref{theorem recursive SHP} and \ref{algo} if the last step \eqref{eq_hmoll1} is omitted. The algorithm of theorem~\ref{algo} computes $SHP_0(X)$ by a very similar construction if the parameter $c=e^d$ is chosen and the input data are transformed such that the last component refers to the num\'{e}raire asset. In contrast to the algorithm of theorem~\ref{algo}, the liquidation map does not occur in lemma~\ref{lemma scalar vs set algo} and its proof. The reason is that geometric duality is not restricted to polyhedral sets that contain no lines, but Benson's algorithm is. 

\begin{remark}
A relation between the function whose epigraph is the set $SHP_t(X)$ and the function $\widetilde{V}_t$ could also be deduced via conjugation (Legendre-Fenchel transform) in the spirit of section~4.2 in \cite{RouxZastawniak09}. A similar construction has been used to prove a geometric duality theorem for convex vector optimization problems \cite{Heyde11}.
\end{remark}

\end{document}